\newtheorem{theorem}{Theorem}[section]
\newtheorem{proposition}[theorem]{Proposition}
\newtheorem{corollary}[theorem]{Corollary}
\newtheorem{definition}[theorem]{Definition}
\newtheorem{remark}[theorem]{Remark}
\newtheorem{example}[theorem]{Example}
\def\NN{{\mathbb N}}
\def\ZZ{{\mathbb Z}}
\def\QQ{{\mathbb Q}}
\def\KK{{\mathbb K}}
\def\SS{{\mathbb S}}
\def\FF{{\mathbb F}}
\def\lcm{{\mathrm{lcm}}}
\def\Mon{{\mathrm{Mon}}}
\def\NF{{\mathrm{NF}}}
\def\Ker{{\mathrm{Ker\,}}}
\def\Gr{Gr\"obner}
\def\lm{{\mathrm{lm}}}
\def\lc{{\mathrm{lc}}}
\def\lt{{\mathrm{lt}}}
\def\LM{{\mathrm{LM}}}
\def\cC{{\mathcal{C}}}
\def\hvarphi{{\hat{\varphi}}}
\def\id{{\mathrm{id}}}
\def\GF{{\mathrm{GF}}}
\def\M{{\mathrm{M}}}
\def\Tri{{\sc Trivium}}
\def\Biv{{\sc Bivium}}
\def\Kee{{\sc KeeLoq}}
\def\bX{{\bar{X}}}
\def\bR{{\bar{R}}}
\def\bT{{\bar{\mathrm{T}}}}
\def\bS{{\bar{\mathrm{S}}}}
\def\T{{\mathrm{T}}}
\def\S{{\mathrm{S}}}
\def\bV{{\bar{V}}}
\def\cO{{\mathcal{O}}}
\begin{document}

\title[Stream/block ciphers, difference equations and algebraic attacks] 
{Stream/block ciphers, difference equations and algebraic attacks}

\author[R. La Scala]{Roberto La Scala$^*$}

\author[S.K. Tiwari]{Sharwan K. Tiwari$^{**}$}

\address{$^*$ Dipartimento di Matematica, Universit\`a di Bari,
Via Orabona 4, 70125 Bari, Italy}
\email{roberto.lascala@uniba.it}

\address{$^{**}$ Scientific Analysis Group, Defence Research
\& Development Organization, Metcalfe House, Delhi-110054, India}
\email{shrawant@gmail.com}

\thanks{The first author acknowledges the support of the University of Bari,
Grant ref. 73251. The second author thanks the Scientific Analysis Group, DRDO,
Delhi, for the support.}

\subjclass[2000] {Primary 11T71. Secondary 12H10, 13P10}

\keywords{Stream and block ciphers; Algebraic difference equations; \Gr\ bases.}

\begin{abstract}
In this paper we model a class of stream and block ciphers as systems of (ordinary)
explicit difference equations over a finite field. We call this class ``difference ciphers''
and we show that ciphers of application interest, as for example systems of LFSRs
with a combiner, \Tri\ and \Kee, belong to the class. By using Difference Algebra,
that is, the formal theory of difference equations, we can properly define and study
important properties of these ciphers, such as their invertibility and periodicity.
We describe then general cryptanalytic methods for difference ciphers that follow
from these properties and are useful to assess the security. We illustrate such algebraic
attacks in practice by means of the ciphers \Biv\ and \Kee.
\end{abstract}

\maketitle


\section{Introduction}

Algebraic cryptanalysis concerns the possibility to perform an attack to a cipher
as an instance of polynomial system solving. This idea dates back at least to Shannon
who in his foundational paper of modern Cryptography \cite{Sh} wrote that security
can be essentially assumed ``if we could show that solving a certain (crypto)system
requires at least as much work as solving a system of simultaneous equations in a large
number of unknowns, of a complex type''. In the last 20 years, these cryptanalytic ideas
have become actual algorithms and implementations by the work of many cryptographers:
see, for instance, the long reference list of the comprehensive book of G. Bard
entitled ``Algebraic Cryptanalysis'' \cite{Ba}. Among this vast literature, we would like
to mention the pioneering work of N. Courtois \cite{CM,CP} and his methods for
polynomial system solving, such as XL and ElimLin \cite{CKPS}. It is important to reference
also the role played by the F4-F5 algorithms due to J.-C. Faug\`ere \cite{Fa1,Fa2} for providing
complexity formulas for \Gr\ bases computations over (semi)regular systems \cite{BFP,BFS}.
Indeed, such complexity is much lower than in the worst case and these systems
naturally arise in the context of multivariate cryptography. In other words,
such complexity formulas provide essential cryptanalysis for multivariate protocols
as Rainbow \cite{CDKPPSY} which is among the three NIST post-quantum signature finalists.

The usual formalism of algebraic cryptanalysis are systems of polynomial equations
and generally Commutative Algebra (over finite fields) but, as a matter of fact,
a natural modeling of many stream and block ciphers are systems of (algebraic
ordinary) explicit difference equations. In fact, such ciphers are defined
as recursive rules determining the evolution of a vector, with entries in a finite field,
which is called the state or register of the cipher. This evolution runs along
a discrete time corresponding to clocks or rounds. Because of their simple structure,
explicit difference systems provide such a recursion, that is, the existence and uniqueness
of solutions once given their initial state. Despite this simplicity, the solutions
of these systems can be extremely involved and hence interesting for the purposes
of Cryptography. We mention, for instance, the complex evolution of a discrete
predator-prey model \cite{Ag} or the security properties of the stream cipher
\Tri\ \cite{DCP} where both underlying systems are just few quadratic explicit
difference equations.

The formal theory of algebraic difference equations is called Difference Algebra
and it was introduced by J.F. Ritt \cite{RR} as a discrete couterpart of his
celebrated Differential Algebra. This theory provides important insights about
the structure of the solutions of a system of difference equations (see \cite{Co,Le,Wi,Wi2}).
In particular, by mimicking Commutative Algebra, one has the notion of difference
ideal and corresponding difference variety, as well the methods of difference
\Gr\ bases \cite{GLS,LS}. The use of difference equations in Cryptography has a long
tradition because of Linear Feedback Shift Registers (briefly LFSRs) which are just
linear difference equations over the field with two elements. These very simple recursive
rules have been used at least since the 1950s to obtain, for instance, pseudorandom
number generators. For a classical reference, see \cite{Go}.
The problem with LFSRs is that they are easy to attack because of linearity
and hence a standard strategy to enhance security consists in considering
a system of LFSRs together with a combining (or filtering) non-linear function.
Many pseudorandom generators and stream ciphers are of this kind such as
the Geffe generator and the cipher E0 of the Bluetooth protocol. We reference
the book \cite{Kl} for a detailed guide to them.

Another option consists in having non-linear explicit difference equations
governing the evolution of the internal state of the cipher and a linear function
for defining the keystream output. The prototype of such stream cipher is \Tri.
Note that its explicit system has a state transition function which is invertible.
This is a possible flaw because some opponents may recover the initial state
containing the key by attacking any internal state. Despite many attacks
also of this kind \cite{EVP,HL,MB}, the three quadratic explicit difference equations
of \Tri\ remain inexpugnable. On the other hand, if an invertible system contains
a subsystem that can be used to evolve independently the keys, this flaw becomes
a resource for defining a block cipher. These ideas appear in the definition
of the block cipher \Kee\ which has been cryptanalized in a critical way
\cite{CBB,CBW} because of the short period of the state transition function
of its key subsystem.

The present paper is organized as follows. In Section 2, for any base field,
we introduce the formalism of Difference Algebra for the purpose of providing
a precise definition of systems of (ordinary) explicit difference equations
from an algebraic viewpoint. We introduce then the notion of state transition
endomorphism and we apply the methods of difference \Gr\ bases of difference
ideals \cite{GLS,LS} to obtain a key property contained in Theorem \ref{main}.
This property, together with Theorem \ref{keyeq}, will be used in Section 5
to obtain the equations satisfied by the keys for a general algebraic attack
to a difference stream cipher.

In Sections 3 and 4, by means of the notion of state transition endomorphism 
we define key properties of explicit systems such as their invertibility
and periodicity. The use of symbolic computation, namely \Gr\ bases,
provides effective methods to check invertibility and compute inverse
systems which is essential for the cryptanalysis of difference ciphers.
In Section 4 we also review, from an algebraic point of view, the theory
to maximize the period of a system of LFSRs over a prime finite field.

In Section 5, we finally introduce the class of ``difference stream and block ciphers''
as ciphers that are defined by explicit difference systems over a finite field.
The motivation is that many ciphers of application interests can be easily modeled
in this way and we can provide general methods for the algebraic cryptanalysis 
of this class. This theory can be used therefore for developing new ciphers.
As already mentioned, in Section 5 we obtain the equations satisfied by the keys
of a difference stream cipher which are compatible with a known keystream.
We show that a finite number of elements of the keystream are enough
to provide these equations. By means of the invertibility property studied
in Section 3, we also suggest how to reduce the degree of the equations.
We discuss then guess-and-determine strategies for polynomial system solving
over a finite field and provide details for the case that the system has
a single solution. Indeed, this is often the case in the cryptanalysis of
stream ciphers once a sufficient amount of elements of the keystream is known.
For difference block ciphers we discuss a general algebraic attack
which follows from the condition that the key subsystem has a short period.
This attack has been introduced for the cipher \Kee\ in the papers \cite{CBB,CBW}.

Finally, in Section 6 and 7 we illustrate the above algebraic attacks by applying
them to concrete ciphers, namely \Biv\ and \Kee. Although this is not our primary goal
in the present paper, we are able to obtain speedup with respect to similar attacks
to \Biv\ \cite{EVP}. We also improve the previous polynomial system solving time
\cite{CBB,CBW} in the attack to \Kee. An interesting comparison of \Gr\ bases vs
SAT solvers is also provided in our tests. Some conclusions are draw in Section 8.

\section{Explicit difference system}

Let $\KK$ be any field and fix an integer $n > 0$. Consider a set of variables
$X(t) = \{x_1(t),\ldots,x_n(t)\}$, for any $t\in\NN = \{0,1,2,\ldots\}$.
Put $X = \bigcup_{t\geq 0} X(t)$ and denote by $R = \KK[X]$ the polynomial
algebra in the infinite set of variables $X$. Moreover, consider the injective
algebra endomorphism $\sigma: R\to R$ such that $x_i(t)\mapsto x_i(t+1)$
for all $1\leq i\leq n$ and $t\geq 0$. We call $\sigma$ the {\em shift map of $R$}.
The algebra $R$, endowed with the map $\sigma$, is called the {\em algebra
of ordinary difference polynomials (with constant coefficients)}. We also
need the following notations. For any integers $r_1,\ldots,r_n\geq 0$
and $t\geq 0$, we define the subset
\[
\bX = \{x_1(0),\ldots,x_1(r_1-1),\ldots,x_n(0),\ldots,x_n(r_n-1)\}\subset X
\]
and the subalgebra $\bR = \KK[\bX]\subset R$.

\begin{definition}
Let $r_1,\ldots,r_n\geq 0$ be integers and consider some polynomials
$f_1,\ldots,f_n\in \bR$. A {\em system of (algebraic ordinary)
explicit difference equations} is by definition an infinite system of polynomial
equations of the kind
\begin{equation*}
\label{sys}
\left\{
\begin{array}{ccc}
x_1(r_1 + t) & = & \sigma^t(f_1), \\
& \vdots \\
x_n(r_n + t) & = & \sigma^t(f_n). \\
\end{array}
\right.
\quad (t\geq 0)
\end{equation*}
Such a system is denoted briefly as
\begin{equation}
\label{dsys}
\left\{
\begin{array}{ccc}
x_1(r_1) & = & f_1, \\
& \vdots \\
x_n(r_n) & = & f_n. \\
\end{array}
\right.
\end{equation}
A {\em $\KK$-solution of the system $(\ref{dsys})$} is an $n$-tuple
of functions $(a_1,\ldots,a_n)$ where each $a_i:\NN\to\KK$ satisfies
the equations $x_i(r_i + t) = \sigma^t(f_i)$, for all $t\geq 0$.
The element $a_i(t)\in\KK$ is called the {\em value of the function $a_i$
at the clock $t\geq 0$}.
\end{definition}

\begin{definition}
Consider an explicit difference system $(\ref{dsys})$. We define the algebra
endomorphism $\bT:\bR\to\bR$ such that, for any $i = 1,2,\ldots,n$
\[
x_i(0)\mapsto x_i(1),\ldots,x_i(r_i-2)\mapsto x_i(r_i-1),
x_i(r_i-1)\mapsto f_i.
\]
If $r = r_1 + \ldots + r_n$, we denote by $\T:\KK^r\to \KK^r$ the polynomial
map corresponding to $\bT$. For any polynomial $f\in \bR$ and for each vector
$v\in\KK^r$, one has that
\begin{equation}
\label{fund}
\bT(f)(v) = f(\T(v)).
\end{equation}
If $(a_1,\ldots,a_n)$ is a $\KK$-solution of $(\ref{dsys})$, we call the vector
\[
v(t) = (a_1(t),\ldots,a_1(t+r_1-1),\ldots,a_n(t),\ldots,a_n(t+r_n-1))\in\KK^r
\]
the {\em state of $(a_1,\ldots,a_n)$ at the clock $t\geq 0$}. In particular,
$v(0)$ is the {\em initial state of $(a_1,\ldots,a_n)$}. Then, the function
$\T$ maps the $t$-state $v(t)$ into the $(t+1)$-state $v(t+1)$, for all clocks
$t\geq 0$. We call $\bT$ the {\em state transition endomorphism} and $\T$
the {\em state transition map of the explicit difference system (\ref{dsys})}.
\end{definition}

\begin{example}

Let $\KK = \QQ$ and consider the following explicit system
\[
\left\{
\begin{array}{ccl}
x(1) & = & x(0)^2 + y(0)^2, \\
y(1) & = & 2 x(0)y(0).
\end{array}
\right.
\]
If $\bR = \KK[x(0),y(0)]$, the corresponding state transition endomorphism
$\bT:\bR\to\bR$ is defined as $x(0)\mapsto x(0)^2 + y(0)^2, y(0)\mapsto 2x(0)y(0)$.
One computes that the $\KK$-solutions of the above system are the functions $a,b:\NN\to\KK$
such that, for all $t\geq 0$
\begin{equation*}
\begin{gathered}
a(t) = \frac{(a(0) + b(0))^{2^t} + (a(0) - b(0))^{2^t}}{2}, \\
b(t) = \frac{(a(0) + b(0))^{2^t} - (a(0) - b(0))^{2^t}}{2}.
\end{gathered}
\end{equation*}
\end{example}

We have the following existence and uniqueness theorem for the solutions
of an explicit system.

\begin{theorem}
\label{unique}
Denote by $V_\KK$ the set of all $\KK$-solutions of the explicit difference system 
$(\ref{dsys})$. We have a bijective map $\iota:V_\KK\to\KK^r$ such that
\[
(a_1,\ldots,a_n)\mapsto (a_1(0),\ldots,a_1(r_1-1),\ldots,a_n(0),\ldots,a_n(r_n-1)).
\]
In other words, the system $(\ref{dsys})$ has a unique $\KK$-solution once
fixed its initial state. Moreover, the maps $\iota,\iota^{-1}$ are both
polynomial ones.
\end{theorem}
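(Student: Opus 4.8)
The plan is to construct the map $\iota^{-1}$ explicitly and show it is the two-sided inverse of $\iota$, with both maps given by polynomials. Given a vector $v = (v_1^{(0)},\ldots,v_1^{(r_1-1)},\ldots,v_n^{(0)},\ldots,v_n^{(r_n-1)})\in\KK^r$, I would define functions $a_i:\NN\to\KK$ by first setting $a_i(j) = v_i^{(j)}$ for $0\leq j\leq r_i-1$, and then extending recursively: having defined $a_i(s)$ for all $i$ and all $s < r_i + t$, the equation $x_i(r_i+t) = \sigma^t(f_i)$ prescribes $a_i(r_i+t)$ as the value of the polynomial $\sigma^t(f_i)$ evaluated at the already-defined earlier values — note that $\sigma^t(f_i)$ involves only variables $x_\ell(s)$ with $s < r_\ell + t$, so this evaluation makes sense. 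By induction on the clock $t$ this determines a unique $n$-tuple of functions, which by construction is a $\KK$-solution, and $\iota$ applied to it returns $v$; conversely, any $\KK$-solution is forced clock-by-clock to agree with the one so constructed from its own initial state, giving injectivity of $\iota$. Hence $\iota$ is a bijection with $\iota^{-1}(v) = (a_1,\ldots,a_n)$.

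For the ``polynomiality'' claim I would interpret it through the state transition map: the assertion is that for each clock $t$, the state $v(t)$ depends polynomially on the initial state $v(0)$, and conversely. The forward direction is immediate from the previous definition — $v(t) = \T^t(v(0))$ where $\T:\KK^r\to\KK^r$ is the polynomial map attached to $\bT$, and a composition of polynomial maps is polynomial; unwinding, each coordinate $a_i(t)$ is obtained by reading off the appropriate entry of $\T^{t-r_i+1}(v(0))$ (or is simply a coordinate of $v(0)$ when $t < r_i$), so $\iota^{-1}$ is polynomial in the sense that every value $a_i(t)$ is a polynomial function of the $r$ coordinates of the initial state. For $\iota$ itself, the map $v(0)\mapsto (a_1(0),\ldots,a_n(r_n-1))$ is literally a coordinate projection (in fact the identity after the obvious reindexing), so it is trivially polynomial.

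The technical point that needs care — and the only place where the hypothesis $f_i\in\bR$ is used essentially — is the observation that $\sigma^t(f_i)$ is a polynomial in the variables $\{x_\ell(s) : 1\leq \ell\leq n,\ 0\leq s\leq r_\ell + t - 1\}$, i.e. that applying $\sigma^t$ never pushes the ``index demand'' of equation $i$ beyond what the recursion has already produced. Since $f_i\in\bR = \KK[\bX]$ involves only $x_\ell(s)$ with $s\leq r_\ell-1$, applying $\sigma^t$ raises every index by $t$, so $\sigma^t(f_i)\in\KK[x_\ell(s) : s\leq r_\ell + t - 1]$; and the left-hand side $x_i(r_i+t)$ has index exactly one above the top index $r_i+t-1$ of the block currently being extended. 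This shows the recursion is well-posed and ``triangular'' in the clock variable, which is exactly what makes both existence and uniqueness work and simultaneously exhibits everything as iterated polynomial substitution. I expect the main obstacle to be purely bookkeeping: keeping the double indexing (component $i$ versus clock $t$, with the shifts $r_i$ differing across components) straight so that the induction is clearly stated; once the correct induction hypothesis — ``$a_\ell(s)$ is defined and is a polynomial function of $v(0)$ for all $\ell$ and all $s$ up to a given clock'' — is in place, each inductive step is a one-line appeal to (\ref{fund}) and closure of polynomials under composition.
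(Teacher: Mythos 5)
Your proposal is correct and follows essentially the same route as the paper: the paper's proof likewise obtains $\iota^{-1}$ by reading the value $a_i(t)$ off a coordinate of $\T^t(v(0))$ and invokes the fact that powers of $\T$ and projections are polynomial maps. The only difference is that you spell out the clock-by-clock recursion (and the key observation that $\sigma^t(f_i)$ only involves variables $x_\ell(s)$ with $s\leq r_\ell+t-1$) to justify existence and uniqueness, a point the paper leaves implicit.
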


\begin{proof}
Consider the state transition map $\T:\KK^r\to\KK^r$ of (\ref{dsys}) which is
a polynomial map. Observe that all powers $\T^t:\KK^r\to\KK^r$ ($t\geq 0$) are
also polynomial maps. If
\[
v(t) = (a_1(t),\ldots,a_1(t+r_1-1),\ldots,a_n(t),\ldots,a_n(t+r_n-1))
\]
denotes the $t$-state of a $\KK$-solution $(a_1,\ldots,a_n)\in V_\KK$, the inverse map
\[
\iota^{-1}:v(0)\mapsto (a_1,\ldots,a_n)
\]
is obtained in the following way. The value $a_1(t)$ is the first coordinate
of the vector $v(t) = \T^t(v(0))$, $a_2(t)$ is its $(r_1+1)$-th coordinate
and so on. Since projections and $\T^t$ are polynomial maps, we conclude that
$\iota^{-1}$ is also such a map.
\end{proof}

Consider the state transition endomorphism $\bT:\bR\to\bR$ of the system (\ref{dsys}).
Note that all powers $\bT^t:\bR\to\bR$ ($t\geq 0$) are also endomorphisms whose
corresponding polynomial maps are the functions $\T^t:\KK^r\to\KK^r$.
For all $1\leq i\leq n$ and $t\geq 0$, we define the polynomial
\[
f_{i,t} = \bT^t(x_i(0))\in \bR.
\]
By the argument of Theorem \ref{unique} and the identity (\ref{fund}),
it follows that $a_i(t) = f_{i,t}(v(0))$, for all $\KK$-solutions
$(a_1,\ldots,a_n)\in V_\KK$.

We briefly introduce now the notion of difference \Gr\ basis which provides
very often an alternative way to compute the polynomial $f_{i,t}$.
For a complete reference we refer to \cite{GLS,LS}.

\begin{definition}
Let $I$ be an ideal of the algebra $R$. We call $I$ a {\em difference ideal}
if $\sigma(I)\subset I$. Denote $\Sigma = \{\sigma^t\mid t\geq 0\}$ and
let $G$ be a subset of $R$. Then, we define $\Sigma(G) = \{\sigma^t(g)\mid 
g\in G, t\geq 0\}\subset R$. We call $G$ a {\em difference basis} of a
difference ideal $I$ if $\Sigma(G)$ is a basis of $I$ as an ideal of $R$.
In other words, all elements $f\in I$ are such that
$f = \sum_i f_i \sigma^{t_i}(g_i)$ where $f_i\in R, g_i\in G$ and
$t_i\geq 0$. In this case, we denote
$\langle G\rangle_\sigma = \langle \Sigma(G) \rangle = I$.
\end{definition}

Consider an explicit difference system (\ref{dsys}) and define the subset
\[
G = \{x_1(r_1) - f_1, \ldots, x_n(r_n) - f_n\}\subset R.
\]
If $I = \langle G \rangle_\sigma$, we have that $(a_1,\ldots,a_n)$
is a $\KK$-solution of the system (\ref{dsys}) if and only if this is
a simultaneous $\KK$-solution of all polynomials $f\in I$. In other words,
by substituting the variables $x_i(t)$ of each $f\in I$ with the elements
$a_i(t)\in\KK$, one always obtains zero. Then, we also say that
$(a_1,\ldots,a_n)$ is a {\em $\KK$-solution of the difference ideal $I$}
and we put $V_\KK(I) = V_\KK$. For defining \Gr\ bases, one needs to introduce
monomial orderings on $R$.

\begin{definition}
\label{monord}
Let $\prec$ be a total ordering on the set $M = \Mon(R)$ of all monomials 
of $R$. We call $\prec$ a {\em monomial ordering of $R$} if the following
properties hold:
\begin{itemize}
\item[(i)] $\prec$ is a multiplicatively compatible ordering, that is,
if $m'\prec m''$ then $m m' \prec m m''$, for all $m, m', m''\in M$;
\item[(ii)] $\prec$ is a well-ordering, that is, every non-empty subset
of $M$ has a minimal element.
\end{itemize}
In this case, it follows that
\begin{itemize}
\item[(iii)] $1\prec m$, for all $m\in M, m\neq 1$.
\end{itemize}
\end{definition}

Indeed, from $1\succ m$ and the property (i) it follows that we have an infinite strictly
decreasing sequence
\[
1\succ m\succ m^2\succ \ldots
\]
which contradicts the property (ii). Even though the variables set $X$ is infinite,
by Higman's Lemma \cite{Hi} the polynomial algebra $R = K[X]$ can always be
endowed with a monomial ordering. For the following version of this key result,
see for instance \cite{AH}, Corollary 2.3 and also the remarks at the beginning
of page 5175 of that reference.

\begin{proposition}
Let $\prec$ be a total ordering on $M$ which verifies the properties
$(i),(iii)$ of Definition \ref{monord}. If the restriction of $\prec$
to the variables set $X\subset M$ is a well-ordering then $\prec$ is also
a well-ordering on $M$, that is, it is a monomial ordering of $R$.
\end{proposition}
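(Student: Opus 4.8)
The plan is to argue by contradiction, reducing the statement to the combinatorial form of Higman's Lemma for a well-quasi-ordered alphabet. Assume that $\prec$ is \emph{not} a well-ordering of $M$. Since $\prec$ is a total order, the failure of property (ii) produces an infinite strictly decreasing sequence of monomials $m_1 \succ m_2 \succ m_3 \succ \cdots$. Write each monomial as a commutative word over $X$, say $m_k = y_{k,1}\cdots y_{k,d_k}$ with $y_{k,j}\in X$. On $M$ I introduce the \emph{embedding quasi-order} $\le_H$ induced by the restriction $\preceq$ of $\prec$ to $X$: for $m = y_1\cdots y_d$ and $m' = z_1\cdots z_e$ with $y_i,z_j\in X$, we set $m\le_H m'$ if there is an injection $\phi\colon\{1,\ldots,d\}\to\{1,\ldots,e\}$ with $y_i\preceq z_{\phi(i)}$ for all $i$. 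This is readily checked to be a quasi-order on $M$ (reflexivity and transitivity are immediate).

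The first step is the monotonicity fact: $m\le_H m'$ implies $m\preceq m'$. With the notation above, $d-1$ successive applications of property (i), using $y_i\preceq z_{\phi(i)}$, give
\[
m = y_1\cdots y_d \preceq z_{\phi(1)} y_2\cdots y_d \preceq z_{\phi(1)} z_{\phi(2)} y_3\cdots y_d \preceq \cdots \preceq z_{\phi(1)}\cdots z_{\phi(d)}.
\]
Since $\phi$ is injective, the positions $\phi(1),\ldots,\phi(d)$ are pairwise distinct, so $m' = \bigl(z_{\phi(1)}\cdots z_{\phi(d)}\bigr)\cdot w$ where $w$ is the product of the remaining factors of $m'$. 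By property (iii) we have $1\preceq w$, so one further application of (i) yields $z_{\phi(1)}\cdots z_{\phi(d)}\preceq m'$, and therefore $m\preceq m'$ by transitivity.

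The second step is to invoke Higman's Lemma. By hypothesis $(X,\preceq)$ is a well-ordering, hence in particular a well-quasi-order: a total well-founded order admits neither an infinite strictly descending chain nor a nontrivial antichain. Applying Higman's Lemma to the alphabet $(X,\preceq)$ in its commutative (finite-multiset) form -- precisely the statement of \cite{AH}, Corollary 2.3 -- we conclude that $(M,\le_H)$ is a well-quasi-order. In particular, from the infinite sequence $(m_k)$ we obtain indices $i<j$ with $m_i\le_H m_j$. By the monotonicity fact this forces $m_i\preceq m_j$, contradicting $m_i\succ m_j$. Hence no such descending sequence exists and $\prec$ is a well-ordering of $M$.

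I expect the main obstacle to be pinning down the exact version of Higman's Lemma that is needed: because $X$ is infinite one genuinely needs the well-quasi-order formulation (Dickson's Lemma for finitely many variables does not suffice), and because monomials are commutative one needs the finite-multiset variant rather than the classical sequence version; carefully matching these hypotheses with the cited corollary is the delicate point. Once that ingredient is available, the reduction to an infinite $\prec$-descending chain and the monotonicity lemma $m\le_H m'\Rightarrow m\preceq m'$ are only routine manipulations of the axioms (i) and (iii).
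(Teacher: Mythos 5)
Your proof is correct. Note first that the paper itself gives no proof of this proposition: it is stated as a quoted result, with the justification delegated entirely to the citation of Aschenbrenner--Hillar (Corollary 2.3 and the remarks on p.~5175 of that reference). Your argument is a correct, self-contained reconstruction of essentially the argument that citation encapsulates: reduce failure of the well-ordering property to an infinite strictly descending chain, apply the well-quasi-order (finite-multiset) form of Higman's Lemma to the alphabet $(X,\preceq)$, and then use the monotonicity of $\prec$ with respect to the embedding order, which is exactly where hypotheses (i) and (iii) enter (multiplicativity to replace each $y_i$ by the dominating $z_{\phi(i)}$, and $1\preceq w$ to absorb the leftover factor of $m'$). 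The two delicate points you flag are indeed the right ones, and you handle both correctly: a well-ordering is in particular a well-quasi-order, and the multiset variant of Higman's Lemma follows from the classical word version by choosing arbitrary linearizations of the monomials, since a scattered-subword embedding with termwise domination induces a multiset embedding. So where the paper buys brevity by outsourcing the statement, your version makes the dependence on (i) and (iii) explicit, which is arguably more informative in context.
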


To introduce difference \Gr\ bases, we need monomial orderings that are compatible
with the shift map. 

\begin{definition}
Let $\prec$ be a monomial ordering of $R$. We call $\prec$ a {\em difference
monomial ordering of $R$} if $m\prec m'$ implies that $\sigma(m)\prec \sigma(m')$,
for all $m,m'\in M$.
\end{definition}

Note that if $\prec$ is a difference monomial ordering, we have that
$m\prec\sigma(m)$, for all $m\in M, m\neq 1$. Indeed, by assuming $m\succ\sigma(m)$
one obtains an infinite strictly decreasing sequence
\[
m\succ\sigma(m)\succ\sigma^2(m)\succ \ldots
\]
which contradicts the property of $\prec$ of being a well-ordering.

An important class of difference monomial orderings can be defined
in the following way. Recall that all polynomial algebras $R(t) = K[X(t)]$
($t\geq 0$) are in fact isomorphic by means of the shift map. Then,
let us consider the same monomial ordering for all such algebras.
Since $R = \bigotimes_{t\geq 0} R(t)$, we can define on $R$ the product
monomial ordering such that $X(0)\prec X(1)\prec \ldots$. For any choice
of a monomial ordering on $R(0)$, this is a difference monomial ordering
of $R$ that we call {\em clock-based}.

From now on, we assume that $R$ is endowed with a difference monomial ordering.
Let $f = \sum_i c_i m_i\in R$ with $m_i\in M$ and $0\neq c_i\in \KK$.
If $m_k = \max_\prec\{m_i\}$, we put $\lm(f) = m_k, \lc(f) = c_k$
and $\lt(f) = c_k m_k$. Since $\prec$ is a difference ordering, one has that
$\lm(\sigma(f)) = \sigma(\lm(f))$ and hence $\lc(\sigma(f)) = \lc(f),
\lt(\sigma(f)) = \sigma(\lt(f))$. If $G\subset R$, we denote
$\lm(G) = \{\lm(f) \mid f\in G,f\neq 0\}$ and we put $\LM(G) = \langle \lm(G) \rangle$.
Let $I$ be an ideal of $R$. A polynomial $f = \sum_i c_i m_i\in R$ is called
{\em normal modulo $I$} if $m_i\notin\LM(I)$, for all $i$. Since $R$ is endowed
with a monomial ordering, by a reduction process (see \cite{GLS,LS}) one proves that
for each polynomial $f\in R$ there is a unique element $\NF_I(f)\in R$ such that
$f - \NF_I(f)\in I$ and $\NF_I(f)$ is a normal polynomial modulo $I$.
In other words, the cosets of the normal monomials modulo $I$ form a $\KK$-linear
basis of the quotient algebra $R/I$. We call the polynomial $\NF_I(f)$
the {\em normal form of $f$ modulo $I$}.

\begin{proposition}
Let $G\subset R$. Then $\lm(\Sigma(G)) = \Sigma(\lm(G))$.
In particular, if $I$ is a difference ideal of $R$ then $\LM(I)$
is also a difference ideal.
\end{proposition}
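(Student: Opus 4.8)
The plan is to reduce both assertions to the single-shift identity $\lm(\sigma(f)) = \sigma(\lm(f))$ that was already recorded, promoted to arbitrary powers of $\sigma$, together with the injectivity of $\sigma$.

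First I would prove, by induction on $t\geq 0$, that $\lm(\sigma^t(f)) = \sigma^t(\lm(f))$ for every nonzero $f\in R$. The case $t=0$ is trivial and the case $t=1$ is the identity recalled above; for the inductive step one applies the $t=1$ case to $\sigma^{t-1}(f)$, which is nonzero because $\sigma$ is injective, and uses $\sigma^t = \sigma\circ\sigma^{t-1}$. With this in hand, the equality $\lm(\Sigma(G)) = \Sigma(\lm(G))$ becomes a matter of rewriting: since $\sigma$ is injective, the nonzero elements of $\Sigma(G) = \{\sigma^t(g)\mid g\in G,\ t\geq 0\}$ are exactly the $\sigma^t(g)$ with $g\in G$, $g\neq 0$ and $t\geq 0$, so
\[
\lm(\Sigma(G)) = \{\lm(\sigma^t(g))\mid g\in G,\ g\neq 0,\ t\geq 0\} = \{\sigma^t(\lm(g))\mid g\in G,\ g\neq 0,\ t\geq 0\} = \Sigma(\lm(G)).
\]

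For the ``in particular'' statement, let $I$ be a difference ideal. Then $\sigma^t(I)\subseteq I$ for all $t\geq 0$, so $\Sigma(I)\subseteq I$, while trivially $I\subseteq\Sigma(I)$ (take $t=0$); hence $\Sigma(I) = I$ as a set. Applying the first part with $G = I$ gives $\lm(I) = \lm(\Sigma(I)) = \Sigma(\lm(I))$, i.e. the monomial set $\lm(I)$ is stable under $\sigma$. Therefore $\LM(I) = \langle\lm(I)\rangle = \langle\Sigma(\lm(I))\rangle = \langle\lm(I)\rangle_\sigma$ is a difference ideal by the very definition of $\langle\,\cdot\,\rangle_\sigma$. (Alternatively, one checks directly that $\sigma(\LM(I))\subseteq\LM(I)$ on the generators $\lm(f)$, $f\in I$ nonzero, using $\sigma(\lm(f)) = \lm(\sigma(f))$ and $\sigma(f)\in I$.)

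I expect no serious obstacle here: the argument is bookkeeping around the already-established fact that a difference monomial ordering commutes with $\sigma$ on leading monomials. The only point requiring a little care is the treatment of the zero polynomial when passing from $\Sigma(G)$ to its set of leading monomials, which is precisely where the injectivity of $\sigma$ is used.
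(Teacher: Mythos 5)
Your proposal is correct and follows essentially the same route as the paper: both rest on the identity $\lm(\sigma(f))=\sigma(\lm(f))$ for a difference monomial ordering, iterate it over $\Sigma$, and then apply the resulting set equality to the difference ideal $I$ to see that $\lm(I)$ is $\sigma$-stable and hence $\LM(I)$ is a difference ideal. You merely spell out the induction on $t$, the role of injectivity of $\sigma$ in handling the zero polynomial, and the equality $\Sigma(I)=I$, all of which the paper leaves implicit in its one-line argument.
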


\begin{proof}
Since $R$ is endowed with a difference monomial ordering, one has that
$\lm(\sigma(f)) = \sigma(\lm(f))$, for any $f\in R,f\neq 0$.
Then, $\Sigma(\lm(I)) = \lm(\Sigma(I))\subset \lm(I)$ and therefore
$\LM(I) = \langle \lm(I) \rangle$ is a difference ideal.
\end{proof}

\begin{definition}
Let $I\subset R$ be a difference ideal and $G\subset I$. We call $G$
a {\em difference \Gr\ basis} of $I$ if $\lm(G)$ is a difference basis
of $\LM(I)$. In other words, $\lm(\Sigma(G)) = \Sigma(\lm(G))$ is a basis
of $\LM(I)$, that is, $\Sigma(G)$ is a \Gr\ basis of $I$ as an ideal of $R$.
\end{definition}

For more details about difference \Gr\ bases and an optimized version
of the Buchberger procedure for these bases, we refer to \cite{GLS,LS}.

\begin{proposition}
Consider an explicit difference system $(\ref{dsys})$ and assume that
$R$ is endowed with a difference monomial ordering such that $x_i(r_i)\succ \lm(f_i)$,
for all $1\leq i\leq n$. Then, the set $G = \{x_1(r_1) - f_1,\ldots,x_n(r_n) - f_n\}$
is a difference \Gr\ basis.
\end{proposition}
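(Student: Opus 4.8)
The plan is to prove that $\Sigma(G)$ is a \Gr\ basis of the difference ideal $I = \langle G\rangle_\sigma = \langle\Sigma(G)\rangle$ as an ideal of $R$; by the definition of a difference \Gr\ basis this is exactly the assertion.

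First I would pin down the leading monomials. Writing $g_i = x_i(r_i) - f_i$, note that $f_i\in\bR = \KK[\bX]$ while $x_i(r_i)\notin\bX$, so $f_i$ does not involve the variable $x_i(r_i)$; together with the hypothesis $x_i(r_i)\succ\lm(f_i)$ this gives $\lm(g_i) = x_i(r_i)$. Since $\prec$ is a difference monomial ordering, $\lm(\sigma^t(g_i)) = \sigma^t(\lm(g_i)) = x_i(r_i+t)$ for every $t\geq 0$, hence
\[
\lm(\Sigma(G)) = \Sigma(\lm(G)) = \{\, x_i(r_i+t)\mid 1\leq i\leq n,\ t\geq 0\,\}.
\]
This is a set of pairwise distinct \emph{variables} of $R$ — distinct pairs $(i,t)$ give distinct variables $x_i(r_i+t)$ — so any two of its elements are coprime monomials.

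Next I would apply Buchberger's criterion in the version valid for difference \Gr\ bases (see \cite{GLS,LS}): $\Sigma(G)$ is a \Gr\ basis of $I$ as soon as every $S$-polynomial $\spoly(\sigma^s(g_i),\sigma^t(g_j))$, taken over pairs $(i,s)\neq(j,t)$, reduces to zero modulo $\Sigma(G)$. The leading monomials $x_i(r_i+s)$ and $x_j(r_j+t)$ of such a pair are coprime, so the product (first Buchberger) criterion applies and the corresponding $S$-polynomial reduces to zero; this is the standard coprime-leading-terms argument, and it is legitimate here because each $S$-polynomial involves only finitely many monomials and $\prec$ is a well-ordering, so the reduction process terminates. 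Therefore $\Sigma(G)$ is a \Gr\ basis of $I$, i.e. $\lm(\Sigma(G)) = \Sigma(\lm(G))$ is a basis of $\LM(I)$, which is precisely the statement that $G$ is a difference \Gr\ basis.

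The only genuinely delicate point is the validity of the Buchberger machinery — $S$-polynomials, one-step reductions, and their termination — over the infinite variable set $X$; but this is exactly what is established in \cite{GLS,LS}, so I would simply cite it. As an alternative that sidesteps Buchberger's criterion, one can use the rewriting rules $x_i(r_i+t) = \sigma^t(f_i)$ to reduce every monomial of $R$ modulo $I$ to a $\KK$-linear combination of monomials in $\bX$ (termination again from the well-ordering), and then invoke Theorem \ref{unique} over $\overline{\KK}$ to see that the monomials in $\bX$ are linearly independent in $R/I$; since $\langle\lm(\Sigma(G))\rangle\subseteq\LM(I)$, comparing the normal monomials of $I$ with the monomials in $\bX$ forces $\LM(I) = \langle x_i(r_i+t)\rangle = \LM(\Sigma(G))$, which again yields the claim.
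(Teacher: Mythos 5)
Your proposal is correct and follows essentially the same route as the paper: identify the leading monomials of the shifted generators as the pairwise distinct variables $x_i(r_i+t)$, and invoke Buchberger's product (coprime leading monomials) criterion to conclude that $\Sigma(G)$ is a \Gr\ basis. The extra care you take about termination over the infinite variable set, and the alternative argument via normal forms and Theorem \ref{unique}, go beyond what the paper writes but do not change the substance of the argument.
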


\begin{proof}
From the assumption on the monomial ordering it follows that
$x_i(r_i + t) = \lm(x_i(r_i + t) - \sigma^t(f_i))$, for any $1\leq i\leq n$
and $t\geq 0$. By the linearity of these distinct leading monomials and
the Buchberger's Product Criterion (see, for instance, \cite{GP}) we conclude that
$\Sigma(G)$ is a \Gr\ basis, that is, $G$ is a difference \Gr\ basis.
\end{proof}

From now on, we assume that $x_i(r_i)\succ \lm(f_i)$, for any $i$.
If $I\subset R$ is the difference ideal generated by the set
$G = \{x_1(r_1) - f_1,\ldots,x_n(r_n) - f_n\}$, the above result
implies that $\LM(I) = \langle x_1(r_1),\ldots,x_n(r_n) \rangle_\sigma\subset R$.
In other words, the set of normal polynomials modulo $I$ is exactly the subalgebra
$\bR = \KK[\bX]$ where by definition
$\bX = \{x_1(0),\ldots,x_1(r_1-1),\ldots,x_n(0),\ldots,x_n(r_n-1)\}$.

\begin{proposition}
The map $\eta:R\to \bR, f\mapsto \NF_I(f)$ is an algebra homomorphism.
In other words, one has the algebra isomorphism
$\eta': R/I\to \bR$ such that $f + I\mapsto \NF_I(f)$.
\end{proposition}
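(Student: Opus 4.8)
The plan is to show that $\eta = \NF_I(-)$ respects both addition and multiplication, so that the induced map $\eta'$ on $R/I$ is an algebra homomorphism; since $\eta'$ is evidently $\KK$-linear and surjective (every element of $\bR$ is normal, hence fixed by $\NF_I$), and its kernel is trivial (if $\NF_I(f) = 0$ then $f \in I$, so $f + I = 0$), it is in fact an isomorphism. Linearity of $\NF_I(-)$ is immediate: for $f, g \in R$, the element $f + g - (\NF_I(f) + \NF_I(g))$ lies in $I$ and $\NF_I(f) + \NF_I(g)$ is a $\KK$-linear combination of normal monomials, hence normal modulo $I$; by the uniqueness part of the normal-form property established just before this statement, $\NF_I(f+g) = \NF_I(f) + \NF_I(g)$. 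Compatibility with scalars is the same argument.

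The only real content is multiplicativity: $\NF_I(fg) = \NF_I(\NF_I(f)\,\NF_I(g))$ for all $f, g \in R$, which combined with linearity gives $\eta(fg) = \eta(f)\eta(g)$ once one knows $\eta$ fixes $\bR$. First I would write $f = \NF_I(f) + p$ and $g = \NF_I(g) + q$ with $p, q \in I$; then $fg - \NF_I(f)\NF_I(g) = \NF_I(f)q + p\,\NF_I(g) + pq \in I$ because $I$ is an ideal. Hence $\NF_I(fg) = \NF_I(\NF_I(f)\NF_I(g))$. Now $\NF_I(f)$ and $\NF_I(g)$ both lie in $\bR = \KK[\bX]$, so their product is a polynomial in the variables $\bX$; the point is that $\bR$ is closed under multiplication \emph{and} under $\NF_I$, i.e. $\NF_I$ restricted to $\bR$ is the identity. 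This is exactly where the preceding proposition is used: since $\LM(I) = \langle x_1(r_1), \ldots, x_n(r_n)\rangle_\sigma$, the monomials in $\LM(I)$ are precisely those divisible by some $x_i(r_i + t)$, so a monomial lies outside $\LM(I)$ iff it involves only the variables $\bX$; thus every element of $\bR$ is normal modulo $I$ and is its own normal form. Therefore $\NF_I(fg) = \NF_I(f)\NF_I(g)$, proving $\eta$ is a ring homomorphism, and since it is also $\KK$-linear it is an algebra homomorphism.

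The main (minor) obstacle is purely bookkeeping: one must be careful that $\NF_I$ is well-defined on all of $R$ despite $R$ having infinitely many variables, but this was already granted by the reduction argument cited earlier (``by a reduction process one proves that for each polynomial $f \in R$ there is a unique element $\NF_I(f)$''). Given that, the proof is a short formal verification. To finish, I would note that $\eta$ factors through $R/I$ precisely because $f - g \in I \Rightarrow \NF_I(f) = \NF_I(g)$ (again by uniqueness), giving a well-defined $\eta': R/I \to \bR$; it is a ring homomorphism because $\eta$ is; it is injective since $\eta'(f + I) = 0$ forces $f \in I$; and it is surjective since $\eta'(h + I) = h$ for every $h \in \bR$. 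Hence $\eta'$ is an algebra isomorphism $R/I \xrightarrow{\sim} \bR$.
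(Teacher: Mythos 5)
Your proof is correct and rests on the same key fact as the paper's: since $\LM(I)$ is generated by the variables $x_i(r_i+t)$, the normal monomials are exactly those in $\bX$, so the normal polynomials form the multiplicatively closed subalgebra $\bR$ and $\NF_I$ is therefore multiplicative. The paper states this more tersely (it suffices that $mm'\notin\LM(I)$ whenever $m,m'\notin\LM(I)$, because $\LM(I)$ is generated by variables), but the argument is essentially identical to yours.
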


\begin{proof}
By definition, we have that $\eta$ is a surjective $\KK$-linear map and
$\Ker\eta = I$. Then, it is sufficient to show that $m m'\notin\LM(I)$, for all
monomials $m,m'\notin\LM(I)$. This holds because $\LM(I)$ is an ideal which is
generated by variables.
\end{proof}

\begin{theorem}
\label{main}
Let $\bT:\bR\to \bR$ be the state transition endomorphism 
of the system $(\ref{dsys})$ and consider the algebra endomorphism $\sigma':R/I\to R/I$
such that $f + I\mapsto \sigma(f) + I$. Then, one has that $\bT \eta' = \eta' \sigma'$.
In particular, for each polynomial $f\in \bR$ and for all $t\geq 0$,
we have that $\bT^t(f) = \NF_I(\sigma^t(f))$.
\end{theorem}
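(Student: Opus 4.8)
The plan is to verify the commutativity $\bT\eta' = \eta'\sigma'$ on a generating set of the domain $R/I$, and then bootstrap from the single-shift identity to the claim about $\bT^t$. Since $\eta':R/I\to\bR$ is an algebra isomorphism and both $\bT$ and $\sigma'$ are algebra endomorphisms, it suffices to check the equality on the cosets of the variables $x_i(t)$, $1\le i\le n$, $t\ge 0$, which generate $R/I$ as a $\KK$-algebra. Concretely, one computes $\eta'(\sigma'(x_i(t)+I)) = \NF_I(x_i(t+1))$ and compares it with $\bT(\eta'(x_i(t)+I)) = \bT(\NF_I(x_i(t)))$.

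The first step is to identify $\NF_I(x_i(t))$ explicitly. For $t\le r_i-1$ the variable $x_i(t)$ already lies in $\bR$, hence is normal modulo $I$ and $\NF_I(x_i(t)) = x_i(t)$. For $t\ge r_i$ one uses the relations $x_i(r_i+s) - \sigma^s(f_i)\in I$ together with the fact (established just before the theorem) that the normal polynomials modulo $I$ are exactly the elements of $\bR$; a short induction on $t$ then shows $\NF_I(x_i(t)) = \bT^{t-r_i+?}(\ldots)$, or more cleanly, that $\NF_I(x_i(t))$ is obtained by the same substitution rule that defines $\bT$. In fact the cleanest route is to observe directly that $\bT(\NF_I(x_i(t)))$ and $\NF_I(x_i(t+1))$ both equal the normal form of $\sigma$ applied to the former, using that $\bT$ is compatible with the defining substitution: when $t+1\le r_i-1$ both sides are $x_i(t+1)$; when $t+1 = r_i$, the right side is $\NF_I(x_i(r_i)) = \NF_I(f_i) = f_i$ (as $f_i\in\bR$) while the left side is $\bT(x_i(r_i-1)) = f_i$ by definition of $\bT$; when $t+1 > r_i$, apply $\sigma$-equivariance of $\NF_I$, namely $\NF_I(\sigma(g)) = \NF_I(\sigma(\NF_I(g)))$, which holds because $I$ is a difference ideal, and recurse.

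This last identity, $\NF_I\circ\sigma = \NF_I\circ\sigma\circ\NF_I$ on $R$, is the technical heart and I expect it to be the main obstacle: it says that reducing modulo $I$ commutes with the shift up to a further reduction, and it relies essentially on $\sigma(I)\subset I$ together with the compatibility $\lm(\sigma(g)) = \sigma(\lm(g))$ of the difference monomial ordering. Granting it, the single-shift statement $\bT\eta' = \eta'\sigma'$ follows by the generator check above, and then the general case is immediate: $\bT^t\eta' = \eta'(\sigma')^t$ by iterating, and evaluating at a coset $f+I$ with $f\in\bR$ (so that $\eta'(f+I) = f$) yields $\bT^t(f) = \eta'((\sigma')^t(f+I)) = \NF_I(\sigma^t(f))$, which is exactly the displayed conclusion. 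One should also remark that $(\sigma')^t(f+I) = \sigma^t(f)+I$ is well defined precisely because $I$ is a difference ideal, so the formula $\NF_I(\sigma^t(f))$ makes sense.
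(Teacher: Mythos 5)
Your proposal is correct and in substance the same as the paper's proof: both rest on the observation that for $f\in\bR$ the polynomial $\bT(f)$ differs from $\sigma(f)$ exactly by applications of the relations $x_i(r_i)-f_i\in I$, so that $\bT(f)$, being normal modulo $I$, equals $\NF_I(\sigma(f))$, and the statement for $\bT^t$ follows by iterating $\bT\eta'=\eta'\sigma'$. The only difference is organizational: the paper states this directly for an arbitrary $f\in\bR$ in two lines, whereas you verify it on all generators $x_i(t)+I$ of $R/I$ including $t\geq r_i$, which forces the auxiliary identity $\NF_I\circ\sigma=\NF_I\circ\sigma\circ\NF_I$ (an immediate consequence of $\sigma(I)\subset I$ and uniqueness of normal forms, not really the ``main obstacle'' you anticipate) and a recursion that the paper sidesteps by noting that $\eta'$ is an isomorphism onto $\bR$, so it suffices to treat inputs from $\bR$.
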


\begin{proof}
Consider a polynomial $f\in \bR$, that is, $f = \NF_I(f)$.
The polynomial $\bT(f)\in \bR$ is obtained from the polynomial
$\sigma(f)\in R$ simply by applying the identities $x_i(r_i) = f_i$ ($1\leq i\leq n$).
Because $x_i(r_i) - f_i\in I$, we conclude that $\sigma(f) - \bT(f)\in I$.
\end{proof}

Observe finally that the above result implies that $f_{i,t} = \bT^t(x_i(0)) =
\NF_I(x_i(t))$.

\section{Invertible systems}

An important class of explicit difference systems are the ones such that
a $t$-state can be obtained from a $t'$-state also for $t'\geq t$.

\begin{definition}
For an explicit difference system $(\ref{dsys})$, consider the state transition
endomorphism $\bT:\bR\to\bR$ and the corresponding state transition map
$\T:\KK^r\to\KK^r$ $(r = r_1 + \ldots + r_n)$. We call the system {\em invertible}
if $\bT$ is an automorphism. In this case, $\T$ is also a bijective map. 
\end{definition}

We state now an algorithmic method to establish if an endomorphism of a polynomial
algebra is invertible and to compute its inverse. This important result is due
to Arno van den Essen (see \cite{VDE}, Theorem 3.2.1). Recall that a \Gr\ basis
$G = \{g_1,\ldots,g_r\}$ is called {\em (completely) reduced} if the polynomial
$g_i$ is normal modulo the ideal generated by $G\setminus\{g_i\}$, for all
$1\leq i\leq r$.

\begin{theorem}
\label{invth}
Let $X = \{x_1,\ldots,x_r\}, X' = \{x'_1,\ldots,x'_r\}$ be two disjoint
variable sets and define the polynomial algebras $P = \KK[X], P' = \KK[X']$
and $Q = \KK[X\cup X'] = P\otimes P'$. Consider an algebra endomorphism
$\varphi:P\to P$ such that $x_1\mapsto g_1,\ldots,x_r\mapsto g_r$ $(g_i\in P)$
and the corresponding ideal $J\subset Q$ which is generated by the set
$\{x'_1 - g_1,\ldots, x'_r - g_r\}$. Moreover, we endow the polynomial
algebra $Q$ by a product monomial ordering such that $X\succ X'$. Then,
the map $\varphi$ is an automorphism of $P$ if and only if the reduced
\Gr\ basis of $J$ is of the kind $\{x_1 - g'_1,\ldots,x_r - g'_r\}$
where $g'_i\in P'$, for all $1\leq i\leq r$. In this case, if $\varphi':P'\to P'$
is the algebra endomorphism such that $x'_1\mapsto g'_1, \ldots, x'_r\mapsto g'_r$
and $\xi: P\to P'$ is the isomorphism $x_1\mapsto x'_1, \ldots, x_r\mapsto x'_r$,
we have that $\xi\, \varphi^{-1}  = \varphi'\, \xi$.
\end{theorem}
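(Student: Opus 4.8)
The plan is to reduce the statement to a concrete computation about the ideal $J\subset Q$ and the quotient $Q/J$, mirroring the argument used in Section 2 for the difference ideal $I$. First I would observe that, since $Q = P\otimes P'$ and $J = \langle x'_1 - g_1,\ldots,x'_r - g_r\rangle$, the quotient $Q/J$ is isomorphic to $P$ via the map sending $x_i + J\mapsto x_i$ and $x'_j + J\mapsto g_j$; under this identification the cosets $x'_j + J$ correspond exactly to the images $\varphi(x_j) = g_j$. Because the product monomial ordering on $Q$ has $X\succ X'$, each generator $x'_j - g_j$ has leading monomial $\lm(g_j)$ lying in $P$ (assuming, as is implicit, that $g_j\notin P'$; the degenerate cases are handled trivially), so the reduced \Gr\ basis of $J$ consists of polynomials whose leading terms, after interreduction, generate $\LM(J)$. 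The key point is to characterize when this reduced basis has the stated ``solved'' shape $\{x_1 - g'_1,\ldots,x_r - g'_r\}$ with $g'_i\in P'$.

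The forward direction: suppose $\varphi$ is an automorphism with inverse $\varphi^{-1}: x_i\mapsto g'_i$. Then $\varphi(g'_i) = x_i$, i.e. $g'_i(g_1,\ldots,g_r) = x_i$ in $P$, which says precisely that $x_i - g'_i(x'_1,\ldots,x'_r)\in J$ (substitute $x'_j\mapsto g_j$, which is zero modulo $J$). Each such polynomial $x_i - g'_i$ has leading monomial $x_i$ under the chosen ordering (since $X\succ X'$), so the set $\{x_1 - g'_1,\ldots,x_r - g'_r\}$ is contained in $J$ and has leading monomials $x_1,\ldots,x_r$. To see it is actually the reduced \Gr\ basis, I would argue that $\LM(J) = \langle x_1,\ldots,x_r\rangle$: the quotient $Q/J\cong P\cong\KK[X']$ already shows that the monomials in $X'$ alone span $Q/J$, forcing every $x_i$ into $\LM(J)$; conversely no pure $X'$-monomial can lie in $\LM(J)$, again because those monomials' cosets are $\KK$-linearly independent in $Q/J$ (they form a basis of $P'\cong P$). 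Hence the normal monomials modulo $J$ are exactly the monomials in $X'$, and $\{x_1 - g'_1,\ldots,x_r - g'_r\}$ is a reduced \Gr\ basis because each $x_i - g'_i$ is normal modulo the ideal generated by the others (its only non-normal monomial is its own leading term $x_i$).

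The reverse direction: suppose the reduced \Gr\ basis of $J$ is $\{x_1 - g'_1,\ldots,x_r - g'_r\}$ with $g'_i\in P'$. Then each $x_i - g'_i\in J$, so substituting $x'_j\mapsto g_j$ gives $x_i = g'_i(g_1,\ldots,g_r)$ in $P$, i.e. $\varphi'\circ\varphi = \id_P$ where $\varphi': x_i\mapsto g'_i$ read in $P$ — more precisely $\varphi\circ(\xi^{-1}\varphi'\,)=\id$. For the other composition I would use the symmetric substitution: since $x_j - g'_j\in J = \langle x'_i - g_i\rangle$, write $x_j - g'_j = \sum_i h_{ij}(x'_i - g_i)$ and now substitute $x_i\mapsto x'_i$ (equivalently apply $\xi$, using that $J$ is symmetric enough under the roles of $X,X'$) to deduce $g_j(g'_1,\ldots,g'_r) = x'_j$, giving $\varphi\circ\varphi' $ the identity on $P'$. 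Combining the two compositions yields that $\varphi$ is bijective, hence an automorphism, and the identity $\xi\,\varphi^{-1} = \varphi'\,\xi$ is exactly the bookkeeping relation between $\varphi^{-1}: x_i\mapsto g'_i(x_1,\ldots,x_r)$ and $\varphi': x'_i\mapsto g'_i(x'_1,\ldots,x'_r)$ under the renaming $\xi$.

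The main obstacle I anticipate is the reverse direction, specifically justifying the ``symmetric substitution'' cleanly: one must be careful that knowing $x_j - g'_j$ lies in the ideal generated by $\{x'_i - g_i\}$ really does let us conclude $g_j\circ\varphi' = \id$ on $P'$, since $a$ priori the $g'_i$ were only obtained as outputs of a one-directional elimination. The safe route is to avoid substitution heuristics and instead argue at the level of quotient algebras: the endomorphism $\varphi$ induces a $\KK$-algebra map $P\to Q/J\cong\KK[X']$, and the ``solved form'' of the reduced \Gr\ basis says exactly that this composite is an isomorphism onto $\KK[X']$ with inverse induced by $x'_i\mapsto g'_i$; transporting back through the isomorphism $Q/J\cong P$ then forces $\varphi$ itself to be an isomorphism. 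I would present the argument in that quotient-algebra language to keep it rigorous, and cite \cite{VDE}, Theorem 3.2.1 for the \Gr-basis shape characterization, filling in only the difference-algebra-flavored bookkeeping that connects it to the notation of this paper.
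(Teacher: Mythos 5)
The paper does not actually prove Theorem \ref{invth}: it imports the statement wholesale from van den Essen (\cite{VDE}, Theorem 3.2.1) and only uses it downstream, so there is no in-paper argument to compare against. Your sketch is therefore doing strictly more than the source, and in its final (``quotient-algebra'') form it is essentially correct. The sound skeleton is: $Q/J\cong P$ via $x_i\mapsto x_i$, $x'_j\mapsto g_j$, so the cosets of the pure $X'$-monomials correspond to the products $g_1^{a_1}\cdots g_r^{a_r}$; the reduced basis has the solved shape $\{x_i-g'_i\}$ with $g'_i\in P'$ exactly when $\LM(J)=\langle x_1,\ldots,x_r\rangle$, i.e.\ exactly when those products form a $\KK$-basis of $P$, which is equivalent to $\varphi$ being surjective (spanning) and injective (linear independence of the $g^a$ is algebraic independence of the $g_i$). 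You should make that last equivalence explicit --- it is the one point where injectivity of $\varphi$ genuinely enters the forward direction, and your parenthetical ``they form a basis of $P'\cong P$'' currently glosses over it. Two further remarks: your instinct to distrust the ``symmetric substitution'' in the converse was right --- knowing $x_j-g'_j\in\langle x'_i-g_i\rangle$ does not by itself let you swap the roles of $X$ and $X'$ --- but you do not need it: from $x_i=g'_i(g_1,\ldots,g_r)$ you get a right inverse of $\varphi$, hence surjectivity, and for polynomial endomorphisms of $\KK[x_1,\ldots,x_r]$ surjectivity already forces injectivity by a transcendence-degree count (or by rerunning the basis argument). Finally, the relation $\xi\,\varphi^{-1}=\varphi'\,\xi$ is, as you say, pure bookkeeping once $\varphi^{-1}(x_i)=\xi^{-1}(g'_i)$ is established. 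If you only need the theorem as the paper does, citing \cite{VDE} suffices; your sketch is a legitimate self-contained replacement provided the independence step is spelled out.
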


For the context of explicit difference systems, the above criterion implies
the following results.

\begin{corollary}
\label{dinvth}
Let $\bT:\bR\to\bR$ be the state transition automorphism corresponding to an invertible
system $(\ref{dsys})$, namely $(1\leq i\leq n)$
\[
x_i(0)\mapsto x_i(1),\ldots,x_i(r_i-2)\mapsto x_i(r_i-1),
x_i(r_i-1)\mapsto f_i.
\]
Denote $\bR' = \KK[\bX']$ where
\[
\bX' =
\{x'_1(0),\ldots,x'_1(r_1-1),\ldots,x'_n(0),\ldots,x'_n(r_n-1)\}
\]
and put $Q = \bR\otimes \bR'$.
Consider the ideal $J\subset Q$ that is generated by the following polynomials,
for any $i = 1,2,\ldots,n$
\[
x'_i(0) - x_i(1),\ldots,x'_i(r_i-2) - x_i(r_i-1),x'_i(r_i-1) - f_i.
\]
With respect to a product monomial ordering of the algebra $Q$
such that $\bX\succ \bX'$, the reduced \Gr\ basis of $J$
has the following form
\[
x_i(1) - x'_i(0),\ldots,x_i(r_i-1) - x'_i(r_i-2), x_i(0) - f'_i
\]
where $f'_i\in \bR'$, for all $1\leq i\leq n$.
\end{corollary}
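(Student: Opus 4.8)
The plan is to obtain the statement as a direct specialization of Theorem~\ref{invth}. First I would put $P = \bR$, $P' = \bR'$, $Q = \bR\otimes\bR'$ and take the endomorphism $\varphi$ of that theorem to be the state transition endomorphism $\bT$, under the pairing of variables $x_i(k)\leftrightarrow x'_i(k)$ for $1\leq i\leq n$ and $0\leq k\leq r_i-1$. Writing $\bT$ out coordinate-wise ($x_i(k)\mapsto x_i(k+1)$ for $k\leq r_i-2$ and $x_i(r_i-1)\mapsto f_i$), the ideal $J$ attached to $\varphi$ by Theorem~\ref{invth}, generated by the polynomials $x'_j - \varphi(x_j)$, is generated precisely by the $x'_i(k) - x_i(k+1)$ ($0\leq k\leq r_i-2$) together with the $x'_i(r_i-1) - f_i$, i.e.\ by the generators of $J$ as given in the Corollary; and the product ordering with $\bX\succ\bX'$ is the one required by the theorem. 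Since the system $(\ref{dsys})$ is invertible, $\bT$ is by definition an automorphism, so Theorem~\ref{invth} applies and tells us that the reduced \Gr\ basis of $J$ consists of exactly one polynomial $x_i(k) - g'_{i,k}$ for each variable $x_i(k)\in\bX$, with $g'_{i,k}\in\bR'$. In particular its leading monomials are exactly the variables of $\bX$, so $\LM(J) = \langle\bX\rangle$ and the normal monomials modulo $J$ are precisely the monomials in $\bX'$; that is, every element of $\bR'$ is normal modulo $J$.

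Next I would pin down the polynomials $g'_{i,k}$ for $k\geq 1$. For $1\leq k\leq r_i-1$ the polynomial $x_i(k) - x'_i(k-1)$ is, up to sign, one of the defining generators of $J$, hence lies in $J$; its leading monomial is $x_i(k)$ and the monomial $x'_i(k-1)\in\bR'$ is normal, so $\NF_J(x_i(k)) = x'_i(k-1)$. Now, in any reduced \Gr\ basis the element whose leading monomial is a given $m\in\LM(J)$ has all its non-leading monomials normal, hence equals $m - \NF_J(m)$; by uniqueness of the reduced basis there is exactly one element with leading monomial $x_i(k)$, and it must therefore be $x_i(k) - x'_i(k-1)$. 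This accounts for $r-n$ of the basis elements. For $k = 0$ I would simply set $f'_i := g'_{i,0}\in\bR'$; collecting these $n$ elements together with the $r-n$ elements just identified yields the announced form of the reduced \Gr\ basis.

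I do not expect a genuine obstacle here: once Theorem~\ref{invth} is in hand, the argument is essentially bookkeeping. The one point needing a little care is the claim that the ``trivial'' generators $x_i(k) - x'_i(k-1)$ ($k\geq 1$) actually appear in the \emph{reduced} basis rather than merely in some \Gr\ basis; this is exactly where the uniqueness of the reduced \Gr\ basis and the explicit description of the normal monomials (the elements of $\bR'$) are used. The other routine check is that the variable pairing $x_i(k)\leftrightarrow x'_i(k)$ makes the generating set of $J$ in the Corollary coincide with the one produced by Theorem~\ref{invth}, which is immediate from the coordinate description of $\bT$.
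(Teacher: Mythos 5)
Your proposal is correct and follows essentially the same route as the paper: invoke Theorem~\ref{invth} for the automorphism $\bT$ to get a reduced \Gr\ basis of the form $\{x_i(k) - g'_{i,k}\}$ with $g'_{i,k}\in\bR'$, and then identify the elements with leading monomials $x_i(k)$, $k\geq 1$, as the ``trivial'' generators $x_i(k) - x'_i(k-1)$. Your normal-form argument for that identification (using $\LM(J)=\langle\bX\rangle$ and the uniqueness of the reduced basis) just spells out in detail what the paper dismisses as clear.
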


\begin{proof}
With respect to the considered monomial ordering of $Q$, we have clearly that
the set
\[
G = \bigcup_i \{x_i(1) - x'_i(0),\ldots,x_i(r_i-1) - x'_i(r_i-2)\}\subset J
\]
is the reduced \Gr\ basis of the ideal of $Q$ that is generated by it.
Since $\bT$ is an automorphism, by the Theorem \ref{invth} we conclude that
there are polynomials $f'_i\in \bR'$ ($1\leq i\leq n$) such that the set
\[
G\cup \bigcup_i \{x_i(0)-f'_i\}
\]
is the reduced \Gr\ basis of the ideal $J\subset Q$.
\end{proof}

By the above result, we obtain a sufficient condition to invertibility which is
immediate to verify.

\begin{corollary}
\label{dinvco}
Consider an explicit difference system $(\ref{dsys})$ and assume that
$f_i = x_{k_i}(0) + g_i$ where $\{x(0)_{k_1},\ldots,x(0)_ {k_n}\} = X(0) =
\{x_1(0),\ldots,x_n(0)\}$ and the polynomial $g_i$ has all variables
in the set $\bX\setminus X(0)$, for all $1\leq i\leq n$.
Then, the system $(\ref{dsys})$ is invertible.
\end{corollary}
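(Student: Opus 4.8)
The plan is to establish invertibility via the criterion of Theorem~\ref{invth}, by exhibiting directly the reduced \Gr\ basis of the ideal $J\subset Q = \bR\otimes\bR'$ attached to $(\ref{dsys})$ as in Corollary~\ref{dinvth}. First note that the hypothesis $\{x_{k_1}(0),\ldots,x_{k_n}(0)\} = X(0)$ says exactly that $i\mapsto k_i$ is a permutation of $\{1,\ldots,n\}$. Since every variable occurring in $g_i$ belongs to $\bX\setminus X(0)$, that is, has the form $x_j(s)$ with $1\leq s\leq r_j-1$, we may define $g'_i\in\bR'$ as the polynomial obtained from $g_i$ by the substitution $x_j(s)\mapsto x'_j(s-1)$; this is legitimate because then $0\leq s-1\leq r_j-2$, so the new variables lie in $\bX'$.

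Next I would consider the finite subset
\[
G \ =\ \bigcup_{i=1}^n \Bigl\{\, x_i(1) - x'_i(0),\ \ldots,\ x_i(r_i-1) - x'_i(r_i-2),\ \ x_{k_i}(0) - \bigl(x'_i(r_i-1) - g'_i\bigr) \,\Bigr\}\ \subset\ Q .
\]
After reindexing the last family through the permutation $i\mapsto k_i$, this set has exactly the shape $\{x_i(s) - x'_i(s-1)\}\cup\{x_j(0) - f'_j\}$ with all $f'_j\in\bR'$, i.e.\ the shape demanded by Theorem~\ref{invth} and Corollary~\ref{dinvth}. It then remains to run three routine checks with respect to a product ordering of $Q$ with $\bX\succ\bX'$. (a) $\langle G\rangle = J$: the binomials $x_i(s+1) - x'_i(s)$ are, up to sign, among the defining generators of $J$, and working modulo them one has $x_j(s)\equiv x'_j(s-1)$ for the admissible indices $s\geq 1$, hence $g_i\equiv g'_i$ and the remaining defining generator $x'_i(r_i-1) - f_i = x'_i(r_i-1) - x_{k_i}(0) - g_i$ reduces to $0$ modulo $G$; conversely each element of $G$ clearly lies in $J$. (b) $G$ is a \Gr\ basis: the leading monomial of each of its elements is the displayed variable from $\bX$, and these leading monomials run over all of $\bX$ without repetition — this is precisely where the permutation hypothesis is used — so they are pairwise coprime (being distinct variables) and Buchberger's Product Criterion applies. (c) $G$ is reduced: every non-leading term of every element of $G$ is a monomial in the variables $\bX'$, and $\LM(G)\subset\bX$ contains no such monomial. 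By Theorem~\ref{invth}, $\bT$ is then an automorphism, that is, the system $(\ref{dsys})$ is invertible.

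I do not expect a genuine obstacle here; the only points requiring care are the index bookkeeping in the passage $g_i\mapsto g'_i$, which is well defined exactly because the hypothesis forbids the variables $X(0)$ inside $g_i$, and the observation that the permutation hypothesis is what makes each $x_j(0)$ appear among the leading monomials, so that $G$ really has the form required by Theorem~\ref{invth}. Equivalently, and avoiding \Gr\ bases altogether, one could define the algebra endomorphism $\bS:\bR\to\bR$ by $x_j(s)\mapsto x_j(s-1)$ for $1\leq s\leq r_j-1$ and $x_{k_i}(0)\mapsto x_i(r_i-1) - g'_i$ — well defined since $i\mapsto k_i$ is bijective and all these images lie in $\bR$ — and then verify on the generators of $\bR$ that $\bS\bT = \bT\bS = \id$, using $\bT(g'_i) = g_i$ and $\bS(g_i) = g'_i$; this shows $\bS = \bT^{-1}$ and hence that $\bT$ is an automorphism.
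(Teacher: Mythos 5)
Your proof is correct and follows essentially the same route as the paper: both exhibit the reduced \Gr\ basis of the state transition ideal $J\subset Q$ explicitly --- the shift binomials $x_i(s) - x'_i(s-1)$ together with $x_{k_i}(0) - x'_i(r_i-1) + g'_i$, where $g'_i$ is the normal form of $g_i$ modulo those binomials and lies in $\bR'$ precisely because $g_i$ avoids $X(0)$ --- and then invoke Theorem~\ref{invth}. Your write-up is merely more explicit about the three verifications (generation of $J$, the \Gr\ basis property via the product criterion on the pairwise distinct leading variables, and reducedness) and about the role of the permutation $i\mapsto k_i$, and the closing remark constructing $\bT^{-1}$ directly is a valid shortcut, but none of this changes the argument.
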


\begin{proof}
With the same notations of Corollary \ref{dinvth}, consider the set
\[
G = \bigcup_i \{x_i(1) - x'_i(0),\ldots,x_i(r_i-1) - x'_i(r_i-2)\}\subset J
\]
and assume that the algebra $Q = \bR\otimes\bR'$ is endowed with a product monomial
ordering such that $\bX\succ \bX'$. Since the variables of $g_i$ are in
$\bX\setminus X(0)$, the normal form $g'_i$ modulo the ideal generated by $G$
is a polynomial with variables in the set
\[
\bX'\setminus \{x'_1(r_1-1),\ldots,x'_n(r_n-1)\}.
\]
Then, the reduced \Gr\ basis of the ideal $J\subset Q$ is given by the following
polynomials, for any $i = 1,2,\ldots,n$
\[
x_i(1) - x'_i(0),\ldots,x_i(r_i-1) - x'_i(r_i-2), x_{k_i}(0) - x'_i(r_i-1) - g'_i.
\]
By Theorem \ref{invth} we conclude that $\bT$ is an automorphism, that is, $(\ref{dsys})$
is an invertible system.
\end{proof}

\begin{definition}
For an explicit difference system $(\ref{dsys})$, consider the ideal
$J\subset Q = \bR\otimes \bR'$ which is generated by the following polynomials,
for each $i = 1,2,\ldots,n$
\[
x'_i(0) - x_i(1),\ldots,x'_i(r_i-2) - x_i(r_i-1),x'_i(r_i-1) - f_i.
\]
We call $J$ the {\em state transition ideal of the system (\ref{dsys})}.
\end{definition}

From now on, we assume that $Q$ is endowed with a product monomial ordering
such that $\bX\succ \bX'$.

\begin{definition}
\label{invsys}
Consider an invertible system $(\ref{dsys})$ and the corresponding state
transition ideal $J\subset Q$. If the set
\[
G = \bigcup_i
\{x_i(1) - x'_i(0),\ldots,x_i(r_i-1) - x'_i(r_i-2), x_i(0) - f'_i\}
\]
is the reduced \Gr\ basis of $J$, we denote by $g_i$ the image of $f'_i$
under the algebra isomorphism $\bR'\to \bR$ such that, for any $i = 1,2,\ldots,n$
\[
x'_i(0)\mapsto x_i(r_i-1), x'_i(1)\mapsto x_i(r_i-2), \ldots,
x'_i(r_i-1)\mapsto x_i(0).
\]
The {\em inverse of an invertible system $(\ref{dsys})$} is by definition
the following explicit difference system
\begin{equation}
\label{invdsys}
\left\{
\begin{array}{ccc}
x_1(r_1) & = & g_1, \\
& \vdots \\
x_n(r_n) & = & g_n. \\
\end{array}
\right.
\end{equation}
\end{definition}

Let $\bT,\bS:\bR\to\bR$ be the state transition endomorphisms of an invertible
system (\ref{dsys}) and its inverse system (\ref{invdsys}), respectively. Denote by
$\xi:\bR\to\bR$ the algebra automorphism such that
\[
x_i(0)\mapsto x_i(r_i-1), x_i(1)\mapsto x_i(r_i-2),\ldots,
x_i(r_i-1)\mapsto x_i(0).
\]
By Theorem \ref{invth} and Corollary \ref{dinvth}, we have that
$\xi \bS = \bT^{-1} \xi$.

\begin{proposition}
\label{invstat}
Let $(\ref{invdsys})$ be the inverse system of an invertible system $(\ref{dsys})$.
If $(a_1,\ldots,a_n)$ is a $\KK$-solution of $(\ref{dsys})$, consider
its $t$-state $(t\geq 0)$
\[
v = (a_1(t),\ldots,a_1(t+r_1-1),\ldots,a_n(t),\ldots,a_n(t+r_n-1)).
\]
Denote by $(b_1,\ldots,b_n)$ the $\KK$-solution of $(\ref{invdsys})$ whose
initial state is
\[
v' = (a_1(t+r_1-1),\ldots,a_1(t),\ldots,a_n(t+r_n-1),\ldots,a_n(t)).
\]
If the $t$-state of $(b_1,\ldots,b_n)$ is
\[
u' = (b_1(t),\ldots,b_1(t+r_1-1),\ldots,b_n(t),\ldots,b_n(t+r_n-1)),
\]
then the initial state of $(a_1,\ldots,a_n)$ is
\[
u = (b_1(t+r_1-1),\ldots,b_1(t),\ldots,b_n(t+r_n-1),\ldots,b_n(t)).
\]
\end{proposition}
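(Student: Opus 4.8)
The plan is to reduce the statement to a single identity of polynomial maps on $\KK^r$ and then iterate it.

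First I would make explicit the polynomial map $\Xi:\KK^r\to\KK^r$ corresponding to the automorphism $\xi:\bR\to\bR$, characterised as in $(\ref{fund})$ by $\xi(f)(v)=f(\Xi(v))$ for all $f\in\bR$ and $v\in\KK^r$. Since $\xi$ reverses each block of variables $x_i(0),\ldots,x_i(r_i-1)$, testing this identity on the coordinate functions $f=x_i(j)$ shows at once that $\Xi$ reverses each corresponding block of coordinates of a vector $v$; in particular $\Xi$ is an involution, $\Xi^2=\id$. With this notation, the vector $v'$ appearing in the statement is precisely $\Xi(v)$, and the vector $u$ is precisely $\Xi(u')$. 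Hence the content of the proposition is exactly the equality $u=\Xi(u')=v(0)$, the initial state of $(a_1,\ldots,a_n)$.

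Next I would translate the identity $\xi\,\bS=\bT^{-1}\,\xi$ (established immediately before the proposition from Theorem \ref{invth} and Corollary \ref{dinvth}) from endomorphisms of $\bR$ into polynomial maps of $\KK^r$. Evaluating both sides on an arbitrary $f\in\bR$ at an arbitrary $v\in\KK^r$ and using $(\ref{fund})$ for $\bT$, $\bS$ together with the analogous identity for $\xi$, one gets $f(\S(\Xi(v)))=f(\Xi(\T^{-1}(v)))$ for all $f$ and $v$, hence $\S\circ\Xi=\Xi\circ\T^{-1}$ on $\KK^r$; note $\T$ is bijective because the system is invertible. Since $\Xi$ is an involution this says $\S=\Xi\circ\T^{-1}\circ\Xi$, and an immediate induction (consecutive copies of $\Xi$ cancel in pairs) yields $\S^t=\Xi\circ\T^{-t}\circ\Xi$ for every $t\geq 0$.

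Finally I would assemble the pieces. By Theorem \ref{unique} applied to the inverse system $(\ref{invdsys})$, whose state transition map is $\S$, the state of the solution $(b_1,\ldots,b_n)$ at clock $t$ is $u'=\S^t(v')=\S^t(\Xi(v))$; using $\S^t=\Xi\circ\T^{-t}\circ\Xi$ and $\Xi^2=\id$ this simplifies to $u'=\Xi(\T^{-t}(v))$, so $\Xi(u')=\T^{-t}(v)$. On the other hand $v$ is the state of $(a_1,\ldots,a_n)$ at clock $t$, hence $v=\T^t(v(0))$ with $v(0)$ its initial state, so $\T^{-t}(v)=v(0)$. Therefore $u=\Xi(u')=v(0)$, which is the asserted identity. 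The only delicate step is the contravariance bookkeeping in the second paragraph — making sure $\xi\,\bS=\bT^{-1}\,\xi$ becomes $\S\circ\Xi=\Xi\circ\T^{-1}$ and not the reverse composition — together with checking that $\Xi$ really is the block-reversal involution and that the state at clock $t$ is the $t$-th iterate of the transition map applied to the initial state; everything else is formal.
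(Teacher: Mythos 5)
Your proof is correct and follows exactly the paper's argument: the paper's own proof is just the two-line observation that $u' = \S^t(v')$ and that $\xi\,\bS = \bT^{-1}\,\xi$ forces $u = \T^{-t}(v) = v(0)$, which is precisely what you establish. The extra bookkeeping you supply --- identifying $\Xi$ as the block-reversal involution, checking the contravariant translation to $\S\circ\Xi = \Xi\circ\T^{-1}$, and iterating to $\S^t = \Xi\circ\T^{-t}\circ\Xi$ --- is a faithful expansion of the steps the paper leaves implicit.
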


\begin{proof}
Denote by $\T,\S:\KK^r\to\KK^r$ ($r = r_1 + \ldots + r_n$) the state
transition maps of the systems (\ref{dsys}),(\ref{invdsys}), respectively.
By definition, we have that $u' = \S^t(v')$. Since $\xi \bS = \bT^{-1} \xi$,
we conclude that $u = \T^{-t}(v)$.
\end{proof}

Another useful notion is the following one.

\begin{definition}
An explicit difference system $(\ref{dsys})$ is called {\em reducible}
if there is an integer $0 < m < n$ such that we have a subsystem
\begin{equation}
\label{subdsys}
\left\{
\begin{array}{ccc}
x_1(r_1) & = & f_1, \\
& \vdots \\
x_m(r_m) & = & f_m. \\
\end{array}
\right.
\end{equation}
In other words, one has that $f_1,\ldots,f_m\in\bR_m = \KK[\bX_m]$ where
by definition $\bX_m = \{x_1(0),\ldots,x_1(r_1-1),\ldots,x_m(0),\ldots,x_m(r_m-1)\}$.
In this case, the state transition endomorphism and map of $(\ref{subdsys})$
are just the restrictions of the corresponding functions of $(\ref{dsys})$
to the subring $\bR_m\subset\bR$ and the subspace $\KK^k\subset \KK^r$
$(k = r_1 + \ldots + r_m, r = r_1 + \ldots + r_n)$, respectively.
\end{definition}

The following result is obtained immediately.

\begin{proposition}
Let $(\ref{dsys})$ be a reducible invertible system. Then, its subsystem 
$(\ref{subdsys})$ is also invertible. Moreover, the inverse system of $(\ref{dsys})$
is also reducible with a subsystem which is the inverse system of $(\ref{subdsys})$.
\end{proposition}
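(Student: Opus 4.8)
The plan is to concentrate all the work into a single claim: the state transition automorphism $\bT$ of $(\ref{dsys})$ not only satisfies $\bT(\bR_m)\subseteq\bR_m$ (the reducibility hypothesis, which moreover identifies the restriction $\bT_m := \bT|_{\bR_m}$ with the state transition endomorphism of $(\ref{subdsys})$), but in fact $\bT^{-1}(\bR_m)\subseteq\bR_m$. Granting this, $\bT^{-1}|_{\bR_m}$ is a two-sided inverse of $\bT_m$, so $\bT_m$ is an automorphism of $\bR_m$, that is, $(\ref{subdsys})$ is invertible; the remaining two assertions will then follow by a short computation with state transition endomorphisms, using the identity $\xi\bS = \bT^{-1}\xi$ recalled after Definition \ref{invsys}.

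To prove the claim I would pass to fraction fields. Since $\bR$ is a domain, $\bT$ extends to a $\KK$-algebra automorphism of $\KK(\bX) = \operatorname{Frac}(\bR)$, again denoted $\bT$, with $\bT(\KK(\bX_m))\subseteq\KK(\bX_m)$. Set $L := \bT(\KK(\bX_m))$. The extension $\KK(\bX)/\KK(\bX_m)$ is purely transcendental, with transcendence basis $\bX\setminus\bX_m$; applying the field automorphism $\bT$ shows that $\KK(\bX) = \bT(\KK(\bX))$ is obtained from $L$ by adjoining the elements $\bT(x)$, $x\in\bX\setminus\bX_m$, again purely transcendentally, so $L$ is relatively algebraically closed in $\KK(\bX)$. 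On the other hand $L\subseteq\KK(\bX_m)\subseteq\KK(\bX)$, and since $\bT$ is injective $L$ has the same transcendence degree over $\KK$ as $\KK(\bX_m)$, namely $|\bX_m|$; hence $\KK(\bX_m)/L$ is algebraic, and together with relative algebraic closedness this forces $\KK(\bX_m) = L = \bT(\KK(\bX_m))$. Thus $\bT^{-1}(\KK(\bX_m))\subseteq\KK(\bX_m)$; intersecting with $\bR$ and using $\bR\cap\KK(\bX_m) = \bR_m$ (a polynomial in $\bX$ lying in the field $\KK(\bX_m)$ has zero degree in the variables $\bX\setminus\bX_m$, hence lies in $\bR_m$) gives $\bT^{-1}(\bR_m)\subseteq\bR_m$, as desired.

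With the claim in hand, let $\bS:\bR\to\bR$ be the state transition endomorphism of the inverse system $(\ref{invdsys})$ and $\xi:\bR\to\bR$ the block-reversal automorphism, so that $\bS = \xi\bT^{-1}\xi$ (from $\xi\bS = \bT^{-1}\xi$ and $\xi^2 = \id$). Since $\xi(\bR_m) = \bR_m$ and $\bT^{-1}(\bR_m)\subseteq\bR_m$, we get $\bS(\bR_m)\subseteq\bR_m$; in particular each $g_i = \bS(x_i(r_i-1))$ with $1\leq i\leq m$ lies in $\bR_m$ (and $\bS(x_i(j)) = x_i(j+1)\in\bR_m$ for $j<r_i-1$), so $(\ref{invdsys})$ is reducible, with subsystem the one on the first $m$ blocks. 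Moreover, the state transition endomorphism of this subsystem is $\bS|_{\bR_m} = \xi_m\,\bT_m^{-1}\,\xi_m$, where $\xi_m = \xi|_{\bR_m}$. On the other hand, since $(\ref{subdsys})$ is invertible, applying the construction of Definition \ref{invsys} to $(\ref{subdsys})$ produces a system whose state transition endomorphism $\bS'$ satisfies $\xi_m\bS' = \bT_m^{-1}\xi_m$, i.e. $\bS' = \xi_m\,\bT_m^{-1}\,\xi_m = \bS|_{\bR_m}$. As an explicit difference system is determined by its shape $(r_1,\dots,r_m)$ together with its state transition endomorphism (the right-hand sides being the images of the $x_i(r_i-1)$), the subsystem of $(\ref{invdsys})$ on the first $m$ blocks coincides with the inverse system of $(\ref{subdsys})$.

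The hard part is the second paragraph, i.e. upgrading the one-sided invariance $\bT(\bR_m)\subseteq\bR_m$ to the two-sided $\bT(\bR_m) = \bR_m$. This is a genuine use of the automorphism hypothesis, not merely of injectivity of $\bT$: an injective $\KK$-algebra endomorphism of $\bR$ mapping $\bR_m$ into itself need not map it onto itself (for instance $x_1(0)\mapsto x_1(0)^2$, the remaining generators fixed). The transcendence-degree argument is how I would make the automorphism of the ambient ring do its job; one could instead try to argue with a reduced \Gr\ basis of the state transition ideal of $(\ref{dsys})$ with respect to a product ordering in which, inside $\bX$ and inside $\bX'$, the variables of blocks $m+1,\dots,n$ dominate those of blocks $1,\dots,m$, and show that it restricts to a \Gr\ basis of the state transition ideal of $(\ref{subdsys})$, but controlling the $S$-pairs between the two groups of generators is more delicate than the field-theoretic route.
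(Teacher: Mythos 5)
Your proof is correct. The paper itself offers no argument at all -- it simply declares that ``the following result is obtained immediately'' -- so there is nothing to compare step by step; what your write-up does is supply the one genuinely non-formal ingredient that the paper's phrasing hides. Reducibility gives $\bT(\bR_m)\subseteq\bR_m$ for free, and the identification of the subsystem of the inverse with the inverse of the subsystem is then a routine computation with $\xi\bS=\bT^{-1}\xi$, exactly as in your last paragraph; but the upgrade from $\bT(\bR_m)\subseteq\bR_m$ to $\bT(\bR_m)=\bR_m$ (equivalently $\bT^{-1}(\bR_m)\subseteq\bR_m$) really does use that $\bT$ is an automorphism of all of $\bR$, as your example $x_1(0)\mapsto x_1(0)^2$ shows, and your transcendence-degree argument (pass to $\KK(\bX)$, note $L=\bT(\KK(\bX_m))$ is relatively algebraically closed because $\KK(\bX)/L$ is purely transcendental, and compare transcendence degrees to force $L=\KK(\bX_m)$, then intersect with $\bR$) is a clean and complete way to establish it in the generality the paper needs -- note that a counting argument on the state transition map $\T$ would only work for $\KK=\GF(q)$ and would only show bijectivity of the map, not that $\bT_m$ is a ring automorphism, while a Jacobian argument would be tied to characteristic zero. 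So your route is sound and arguably more honest than the paper's ``immediately''; the only alternative of comparable rigor would be the Gr\"obner-basis route via the state transition ideal that you sketch and rightly set aside as more delicate.
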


\section{Periodic systems}

\begin{definition}
\label{period}
For an invertible system $(\ref{dsys})$, consider the state transition
map $\T:\KK^r\to\KK^r$ $(r = r_1 + \ldots + r_n)$. We call the system
{\em periodic} if there is an integer $d > 0$ such that $\T^d = \id$.
In this case, the period of the map $\T$ is called the {\em period of the
system $(\ref{dsys})$}.
\end{definition}

\begin{proposition}
Consider a periodic system $(\ref{dsys})$ with period $d$. If $(a_1,\ldots,a_n)$
is a $\KK$-solution of $(\ref{dsys})$, then all functions $a_i$ are periodic, that is,
$a_i(t) = a_i(t + d)$ for all clocks $t\geq 0$.
\end{proposition}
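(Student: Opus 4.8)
The plan is to unwind the definitions relating the state transition map $\T$ to the functions $a_i$, and then apply the hypothesis $\T^d = \id$ directly. Recall from the discussion following Theorem \ref{unique} that if $(a_1,\ldots,a_n)$ is a $\KK$-solution with initial state $v(0)$, then the $t$-state is $v(t) = \T^t(v(0))$, and each value $a_i(t)$ is recovered as a fixed coordinate projection of $v(t)$: specifically $a_1(t)$ is the first coordinate of $v(t)$, $a_2(t)$ is its $(r_1+1)$-th coordinate, and in general $a_i(t)$ is the $(r_1 + \cdots + r_{i-1} + 1)$-th coordinate of $\T^t(v(0))$.

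First I would fix a clock $t \geq 0$ and compute $v(t+d) = \T^{t+d}(v(0))$. Since $\T^d = \id$, the semigroup law for iteration gives $\T^{t+d} = \T^t \circ \T^d = \T^t \circ \id = \T^t$, hence $v(t+d) = \T^t(v(0)) = v(t)$. (One should note $\T^{t+d} = \T^d \circ \T^t$ works equally well, and commutativity of powers of a single map is immediate.) Applying the coordinate projection that extracts $a_i$ to both sides of the identity $v(t+d) = v(t)$ yields $a_i(t+d) = a_i(t)$ for every $i = 1,\ldots,n$. Since $t$ was arbitrary, this holds for all clocks $t \geq 0$, which is exactly the claim that each $a_i$ is periodic with $a_i(t) = a_i(t+d)$.

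There is essentially no obstacle here: the statement is a direct corollary of the bijection $\iota$ of Theorem \ref{unique} together with the definition of periodicity. The only thing to be slightly careful about is that the period $d$ of the \emph{system} need not be the minimal period of a \emph{particular} solution $(a_1,\ldots,a_n)$ — for some initial states the functions may repeat with a smaller period dividing $d$ — but the statement as phrased only asserts $a_i(t) = a_i(t+d)$, which follows immediately from $\T^d = \id$ and does not require minimality. Alternatively, and even more algebraically, one could phrase the argument via the state transition endomorphism: $\bT^d = \id$ on $\bR$ (equivalently $\T^d = \id$ on $\KK^r$ since $\bR = \KK[\bX]$ is the coordinate ring), and then $a_i(t) = f_{i,t}(v(0)) = \bT^t(x_i(0))(v(0))$, so $a_i(t+d) = \bT^{t+d}(x_i(0))(v(0)) = \bT^t(\bT^d(x_i(0)))(v(0)) = \bT^t(x_i(0))(v(0)) = a_i(t)$. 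Either route is short; I would present the first, coordinate-projection version as the cleanest.
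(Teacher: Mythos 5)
Your proposal is correct and follows essentially the same route as the paper: both use the observation from Theorem \ref{unique} that $a_i(t)$ is a fixed coordinate projection of $\T^t(v(0))$, and then conclude from $\T^{t+d} = \T^t$ that $a_i(t+d) = a_i(t)$. The extra remarks (on minimality of the period and the alternative argument via $\bT$) are sound but not needed.
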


\begin{proof}
If $v\in\KK^r$ is the initial state of $(a_1,\ldots,a_n)$, by the argument
of Theorem \ref{unique} we have that $a_1(t)$ is the first coordinate
of the vector $\T^t(v)\in\KK^r$. Since $\T^t = \T^{t + d}$, one has that
$\T^t(v) = \T^{t + d}(v)$ and therefore $a_1(t) = a_1(t + d)$. In a similar way,
we also prove that $a_i(t) = a_i(t + d)$ ($1 < i\leq n$).
\end{proof}

Note that if $\KK = \GF(q)$ is a finite field, the symmetric group
$\SS(\KK^r)$ has finite order and therefore all invertible systems
are in fact periodic. We also observe that if $\KK$ is an infinite field,
then the state transition endomorphism $\bT$ is bijective if and only if
the state transition map $\T$ is bijective. Moreover, we have that
$\bT$ is periodic if and only if $\T$ is periodic and in this case
these maps have the same period. Such facts are consequences
of the following general result (see for instance \cite{Ma, Re}).

\begin{proposition}
\label{corresp}
Consider a polynomial algebra $P = \KK[x_1,\ldots,x_r]$ and an algebra endomorphism
$\varphi:P\to P$ such that $x_1\mapsto g_1,\ldots,x_r\mapsto g_r$ $(g_i\in P)$.
Denote by $\hvarphi:\KK^r\to\KK^r$ the corresponding polynomial map, that is,
for any $(\alpha_1,\ldots,\alpha_r)\in\KK^r$
\[
(\alpha_1,\ldots,\alpha_r)\mapsto (g_1(\alpha_1,\ldots,\alpha_r),\ldots,
g_r(\alpha_1,\ldots,\alpha_r)).
\]
The map $\varphi\mapsto \hvarphi$ is a homomorphism from the monoid
of algebra endomorphisms of $P$ to the monoid of polynomial maps
$\KK^r\to\KK^r$. If $\KK$ is an infinite field, this monoid homomorphism
is bijective. Otherwise, if $\KK = \GF(q)$ then the map $\varphi\mapsto \hvarphi$
induces a monoid isomorphism from the monoid of algebra endomorphisms
of the quotient algebra $P/L$,
where $L = \langle x_1^q - x_1,\ldots,x_r^q - x_r\rangle\subset P$.
Note that $P$ and $P/L$ are the coordinate algebras of the affine space $\KK^r$
for the case that $\KK$ is an infinite or finite field, respectively.
\end{proposition}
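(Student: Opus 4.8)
The argument rests on the evaluation identity already used for (\ref{fund}): for every $f\in P$ and every $v\in\KK^r$ one has $\varphi(f)(v)=f(\hvarphi(v))$. This is checked on the generators, where both sides equal $g_i(v)$, and then extended to all of $P$ because $\varphi$ and ``evaluation at $v$'' are $\KK$-algebra homomorphisms. Granting this, applying the identity with $f$ running over the component polynomials of a second endomorphism $\psi$ yields the composition rule $\widehat{\varphi\circ\psi}=\hat\psi\circ\hvarphi$ together with $\widehat{\id}=\id$, which gives the monoid homomorphism property once the order convention for composition is fixed (the correspondence $\varphi\mapsto\hvarphi$ reverses the order of composition).

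Assume first that $\KK$ is infinite. For injectivity, if $\hvarphi=\hat\psi$ then each $g_i-h_i$ vanishes at every point of $\KK^r$, hence is $0$ by the classical fact --- proved by induction on $r$, using that a nonzero univariate polynomial over a field has finitely many roots --- that a polynomial vanishing on all of $\KK^r$ must be zero; so $\varphi=\psi$. For surjectivity, an arbitrary polynomial self-map $(F_1,\dots,F_r)$ of $\KK^r$ equals $\hvarphi$ for the endomorphism $x_i\mapsto F_i$. A bijective monoid homomorphism is an isomorphism, which settles the infinite case.

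Now let $\KK=\GF(q)$. The first and main task is the structural fact that $L$ is exactly the ideal of polynomials vanishing on $\GF(q)^r$: the inclusion into that vanishing ideal holds because $a^q=a$ for $a\in\GF(q)$, and equality follows from a dimension count --- the $q^r$ monomials with every exponent $<q$ are a $\KK$-basis of $P/L$, so $\dim_\KK(P/L)=q^r=|\GF(q)^r|$, which is the $\KK$-dimension of the algebra of all functions $\GF(q)^r\to\KK$; since the evaluation homomorphism from $P$ onto this function algebra (surjective by Lagrange interpolation) kills $L$, comparing dimensions forces it to induce an isomorphism $P/L\cong\{\text{functions }\GF(q)^r\to\KK\}$, so $L$ is precisely its kernel. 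In particular $P/L$ is the coordinate algebra of $\GF(q)^r$. Next, every endomorphism $\varphi$ of $P$ preserves $L$, because $\varphi(x_i^q-x_i)=g_i^q-g_i$ vanishes on $\GF(q)^r$ and hence lies in $L$; thus $\varphi$ descends to an endomorphism $\bar\varphi$ of $P/L$ with $\widehat{\bar\varphi}=\hvarphi$, and conversely every endomorphism of $P/L$ arises in this way by sending each $\bar x_i$ to a chosen representative. Finally, the induced map from the endomorphism monoid of $P/L$ to polynomial self-maps of $\GF(q)^r$ is a monoid homomorphism by the evaluation identity read modulo $L$, it is injective since such an endomorphism is determined by the values of the coordinate functions $\bar x_i$ --- which are exactly what $\hvarphi$ records --- and it is surjective since any self-map of the finite set $\GF(q)^r$ has components in $\mathrm{Fun}(\GF(q)^r,\GF(q))=P/L$. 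This is the asserted monoid isomorphism.

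The one genuinely substantive step, and the one I expect to require the most care, is this finite-field input: identifying $L$ with the vanishing ideal of $\GF(q)^r$, equivalently showing that $P/L$ is the full algebra of functions $\GF(q)^r\to\GF(q)$. Everything else --- the evaluation identity, compatibility with composition, and the infinite-field bijection --- is routine once that is in place.
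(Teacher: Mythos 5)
Your proof is correct, but note that the paper itself offers no proof of this proposition at all: it is stated as a known fact with a pointer to the references (Maubach's paper on polynomial automorphisms over finite fields and Reid's textbook), so there is no internal argument to compare against. Your route is the standard one and it is sound: the evaluation identity $\varphi(f)(v)=f(\hvarphi(v))$ checked on generators, the contravariance $\widehat{\varphi\circ\psi}=\hat\psi\circ\hvarphi$, the vanishing-on-$\KK^r$ argument for injectivity over an infinite field, and, in the finite case, the identification of $L$ with the vanishing ideal of $\GF(q)^r$ via Lagrange interpolation plus a dimension count, followed by descent of endomorphisms to $P/L$. Two small remarks. First, you are right that the correspondence reverses composition, so strictly speaking it is an anti-homomorphism (equivalently a homomorphism to the opposite monoid); the paper glosses over this, and it is immaterial for the uses made of the result (bijectivity of $\bT$ versus $\T$, and equality of periods), but flagging it is appropriate. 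Second, in the dimension count you assert up front that the reduced monomials form a basis of $P/L$, which is really the conclusion of the count rather than an input; the clean order is: they span, so $\dim_\KK P/L\leq q^r$, while surjectivity of evaluation onto the $q^r$-dimensional function algebra gives $\dim_\KK P/L\geq q^r$, whence equality, the basis property, and $L=\ker(\mathrm{ev})$ all at once. Your identification of the vanishing-ideal step as the only substantive input is accurate.
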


An important and difficult task is to compute, or at least bound, the period
of a periodic explicit difference system. As usual, the task becomes easy
in the linear case.

\begin{definition}
An explicit difference system $(\ref{dsys})$ is called {\em linear} if all
polynomials $f_i$ $(1\leq i\leq n)$ are homogeneous linear ones. In other words,
the state transition map $\T:\KK^r\to\KK^r$ is a $\KK$-linear endomorphism
of the vector space $\KK^r$.
\end{definition}

Restating the Rational (or Frobenius) Canonical Form of a square matrix
(see, for instance, \cite{Ja}) in terms of $\KK$-linear endomorphisms,
one has the following result.

\begin{proposition}
\label{rcf}
Let $\psi:\KK^r\to\KK^r$ be any $\KK$-linear endomorphism. Then, there is a
$\KK$-linear automorphism $\xi:\KK^r\to\KK^r$ such that $\psi' = \xi \psi \xi^{-1}$
can be decomposed as a direct sum $\psi' = \bigoplus_{1\leq i\leq n} \psi'_i$
where $\psi'_i:\KK^{r_i}\to \KK^{r_i}$ $(r_1 + \ldots + r_n = r)$ is a $\KK$-linear
endomorphism such that, for any $(\alpha_0,\ldots,\alpha_{r_i-1})\in\KK^{r_i}$
\[
\psi'_i(\alpha_0,\ldots,\alpha_{r_i-2},\alpha_{r_i-1}) =
(\alpha_1,\ldots,\alpha_{r_i-1},g_i(\alpha_0,\ldots,\alpha_{r_i-1}))
\]
and $g_i$ is a homogeneous linear polynomial in $r_i$ variables.
It follows that if $\psi$ is an automorphism of finite period $d$, then
$d = \lcm(d_1,\ldots,d_n)$ where $d_i$ is the period of $\psi'_i$.
\end{proposition}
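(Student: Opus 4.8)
The plan is to reduce the statement to the classical Rational Canonical Form theorem for a single matrix and then to read off the period computation from the block‐diagonal structure. First I would fix a basis of $\KK^r$ and identify $\psi$ with its matrix $A$. The Rational Canonical Form theorem says there is an invertible matrix $P$ such that $P A P^{-1}$ is block diagonal with companion blocks $C_1,\ldots,C_n$, where $C_i$ is the companion matrix of a monic polynomial $p_i(x)$ of degree $r_i$ and $r_1 + \ldots + r_n = r$; letting $\xi$ be the automorphism with matrix $P$, we get $\psi' = \xi\psi\xi^{-1} = \bigoplus_i \psi'_i$ with $\psi'_i$ having matrix $C_i$. The key point is then just to observe that a companion matrix, acting on coordinate vectors, performs exactly the asserted shift‐and‐combine operation: if $p_i(x) = x^{r_i} - (c_{i,r_i-1}x^{r_i-1} + \ldots + c_{i,0})$, then $C_i(\alpha_0,\ldots,\alpha_{r_i-1}) = (\alpha_1,\ldots,\alpha_{r_i-1}, c_{i,0}\alpha_0 + \ldots + c_{i,r_i-1}\alpha_{r_i-1})$, so $g_i$ is precisely the homogeneous linear polynomial $\sum_{j=0}^{r_i-1} c_{i,j}\alpha_j$. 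This is a matter of matching conventions for the companion matrix (I would choose the convention that makes the first $r_i-1$ coordinates shift down), so it is routine.

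For the period statement, I would argue as follows. Since conjugate endomorphisms have the same period, $\psi$ has finite period $d$ if and only if $\psi'$ does, and then they share the value $d$. Now $\psi'^m = \bigoplus_i \psi'^m_i$, so $\psi'^m = \id$ holds if and only if $\psi'^m_i = \id$ for every $i$, i.e.\ if and only if $m$ is a common multiple of all the $d_i$ (where $d_i$ is the period of $\psi'_i$, which is finite because it divides $d$). The smallest such $m$ is $\lcm(d_1,\ldots,d_n)$, hence $d = \lcm(d_1,\ldots,d_n)$.

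The only mild subtlety — not really an obstacle — is the direction of the first equivalence: one must note that each $\psi'_i$ is a restriction of a power‐stable decomposition, so finiteness of the period of the whole forces finiteness of the period of each block, which is what lets us even speak of the numbers $d_i$. Everything else is a direct invocation of the Rational Canonical Form and of the elementary fact, already used repeatedly in the excerpt (e.g.\ in the proof that periodic systems have periodic solutions), that conjugation preserves periods and that a direct sum of maps is the identity exactly when each summand is. I would therefore present the proof as: (1) invoke RCF to obtain $\xi$ and the companion blocks; (2) identify the action of a companion block with the stated shift formula, thereby exhibiting $\psi'_i$ and $g_i$; (3) deduce the $\lcm$ formula from the block decomposition of powers of $\psi'$.
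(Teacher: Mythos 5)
Your proposal is correct and follows exactly the route the paper itself takes: the paper gives no proof of Proposition \ref{rcf} at all, merely stating that it is a restatement of the Rational Canonical Form (citing Jacobson), and your three steps (invoke RCF, identify the companion-block action with the shift formula up to a transpose/convention choice, deduce the $\lcm$ formula from the block decomposition of powers) supply precisely the routine details that restatement requires. The companion-matrix convention you describe matches the matrix $A$ the paper displays later in its treatment of LFSRs, so nothing further is needed.
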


Note that the above result provides that, up to an invertible $\KK$-linear
change of variables, any linear difference system can be obtained
in a canonical form, say (\ref{dsys}), where $f_i$ ($1\leq i\leq n$) is
a linear form which is defined only over the set of variables $\{x_i(0),\ldots,x_i(r_i-1)\}$.
In other words, the system is the join of linear difference equations
on disjoint sets of variables. In Cryptography (see, for instance, \cite{Go,SP}),
a linear difference equation is called a {\em Linear Feedback Shift Register}
or briefly {\em LFSR}.

For cryptographic applications, to have a periodic difference system 
with a large period $d$ is a useful property. In the linear case, according to
Proposition \ref{rcf}, to maximize $d$ one needs that all $d_i$ are coprime
so that $d = d_1\cdots d_n$. Then, the problem reduces to maximize the period
of each single periodic linear difference equation. This problem has a well-known
solution (see \cite{Go}) when $\KK = \GF(p) = \ZZ_p$ with $p$ a prime number.
For the purpose of completeness, we provide this result.

\begin{proposition}
Consider an invertible linear difference equation
\begin{equation}
\label{deq}
x(r) = \sum_{0\leq i\leq r-1} c_i x(i)\ (c_i\in\ZZ_p).
\end{equation}
Denote $g = t^r - \sum_i c_i t^i\in\ZZ_p[t]$ and assume that $g$ is an irreducible
polynomial. Consider the finite field $\FF = \GF(p^r) = \ZZ_p[t]/(g)$ and
the corresponding multiplicative (cyclic) group $\FF^* = \FF\setminus\{0\}$. If the element
$\alpha = t + (g)\in \FF^*$ is a generator of $\FF^*$, that is,
$g$ is a {\em primitive polynomial of $\ZZ_p[t]$}, then the equation $(\ref{deq})$
has maximal period $p^r-1$.
\end{proposition}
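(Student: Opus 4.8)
The plan is to show that the state transition map $\T$ of the linear difference equation $(\ref{deq})$, viewed as a $\KK$-linear automorphism of $\FF = \GF(p^r)$, is conjugate to multiplication by $\alpha$ on $\FF^*$, and then invoke the primitivity hypothesis. First I would record the explicit shape of $\T:\ZZ_p^r\to\ZZ_p^r$ coming from the state transition endomorphism of Definition of $\bT$: it sends $(\beta_0,\ldots,\beta_{r-1})$ to $(\beta_1,\ldots,\beta_{r-1},\sum_i c_i\beta_i)$, i.e. it is the companion matrix of $g$. The key observation is that $\FF = \ZZ_p[t]/(g)$, as a $\ZZ_p$-vector space with basis $1,\alpha,\ldots,\alpha^{r-1}$ (where $\alpha = t + (g)$), carries the $\ZZ_p$-linear map ``multiplication by $\alpha$'', and in this basis that map is exactly the companion matrix of $g$, because $\alpha\cdot\alpha^{i} = \alpha^{i+1}$ for $i<r-1$ and $\alpha\cdot\alpha^{r-1} = \sum_i c_i\alpha^i$ by the relation $g(\alpha)=0$. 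Hence, identifying $\ZZ_p^r$ with $\FF$ via this basis, $\T$ becomes the map $\mu_\alpha: x\mapsto \alpha x$ on $\FF$.

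Next I would translate periodicity into the multiplicative group. Since $g$ is irreducible, $\FF$ is a field and $\alpha\neq 0$, so $\mu_\alpha$ restricts to an automorphism of the cyclic group $\FF^*$ of order $p^r-1$; its order as a permutation of $\FF$ equals the multiplicative order of $\alpha$ in $\FF^*$ (on the nonzero elements it is a product of cycles all of that common length, and it fixes $0$). Therefore the period of $\T$, which by Proposition \ref{rcf} and Definition \ref{period} is precisely the order of the linear automorphism, equals $\mathrm{ord}(\alpha)$ in $\FF^*$. Now the hypothesis that $g$ is a primitive polynomial says exactly that $\alpha$ generates $\FF^*$, i.e. $\mathrm{ord}(\alpha) = |\FF^*| = p^r - 1$, giving the claimed maximal period.

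It remains to justify the word ``maximal'': any invertible $\ZZ_p$-linear map on $\ZZ_p^r$ has order dividing $|\mathrm{GL}_r(\ZZ_p)|$, but more to the point, the companion matrix of a degree-$r$ polynomial over $\ZZ_p$ has order at most $p^r-1$ whenever it is invertible — one way to see this cleanly is that its order divides the order of $\GF(p^r)^*$ when $g$ is irreducible, and in general the minimal polynomial of $\T$ is $g$, so $\T$ satisfies no nonzero polynomial of degree $<r$, forcing the $r$ vectors $v,\T v,\ldots,\T^{r-1}v$ (for suitable $v$) to be independent and the orbit structure to live inside a multiplicative group of size at most $p^r-1$. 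Since the excerpt already invokes $\FF = \GF(p^r) = \ZZ_p[t]/(g)$ and its cyclic group $\FF^*$, the cleanest route is simply: $\T\cong\mu_\alpha$, $\mathrm{ord}(\mu_\alpha)=\mathrm{ord}(\alpha)\le p^r-1$ with equality iff $g$ is primitive.

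I do not expect a serious obstacle here; the only point requiring a little care is the bookkeeping in passing between the state vector convention of Definition \ref{period} (the state is $(a(t),\ldots,a(t+r-1))$, so $\T$ shifts indices up and appends the feedback value) and the left-multiplication-by-$\alpha$ convention on $\FF$, i.e. making sure the companion matrix that appears is the one whose powers have order dividing $p^r-1$ rather than its transpose; this is harmless since a matrix and its transpose have the same order, but it is worth a sentence. The substantive content is entirely the identification of the companion matrix of $g$ with multiplication by $\alpha$ on $\GF(p^r)$, which reduces the period computation to the elementary fact about orders of elements in a cyclic group.
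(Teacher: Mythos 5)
Your proof is correct and follows essentially the same route as the paper: both arguments identify the state transition map with the companion matrix of $g$ and then identify that matrix with the element $\alpha\in\FF^*$, so that the period of $\T$ is the multiplicative order of $\alpha$, namely $p^r-1$ when $g$ is primitive. The only cosmetic difference is that you realize this identification via the regular representation (multiplication by $\alpha$ in the basis $1,\alpha,\ldots,\alpha^{r-1}$, with the correctly noted and harmless transpose issue), whereas the paper phrases it as the algebra isomorphism $\ZZ_p[\alpha]\cong\ZZ_p[A]$ coming from the fact that $g$ is the minimal polynomial of the companion matrix $A$.
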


\begin{proof}
The matrix corresponding to the state transition $\KK$-linear map
$\T:\ZZ_p^r\to\ZZ_p^r$ of (\ref{deq}) with respect to the canonical basis
of $\ZZ_p^r$, is indeed the companion matrix $A$ of the polynomial $g$, that is
\[
A =
\left(
\begin{array}{ccccc}
0 & 1 & 0 & \ldots & 0 \\
0 & 0 & 1 & \ldots & 0 \\
\vdots & \vdots & \vdots & \ddots & \vdots \\
0 & 0 & 0 & \ldots & 1 \\
c_0 & c_1 & c_2 & \ldots & c_{r-1} \\
\end{array}
\right).
\]
Consider the algebra $\M_r(\ZZ_p)$ of all square matrices of order $r$ with entries
in the field $\ZZ_p$ and denote by $\ZZ_p[A]\subset \M_r(\ZZ_p)$ the subalgebra
that is generated by the matrix $A$. It is well-known that the minimal polynomial
of the companion matrix $A$ is exactly $g$ and hence $\FF = \ZZ_p[\alpha]$
is isomorphic to $\ZZ_p[A]$ by the map $\alpha\mapsto A$. Since $\alpha$ is a generator,
that is, an element of maximal period in the cyclic group $\FF^*$, we obtain
that the period of $\alpha$ and $A$ is exactly $p^r - 1$.
\end{proof}

\section{Difference stream/block ciphers}

From now on, let $\KK = \GF(q)$ be a finite field. It is important to note
that in this case, by Lagrange interpolation, any function $\KK^r\to \KK$ is in fact
a polynomial one. Moreover, if $f\in\KK[x_1,\ldots,x_r]$ is the corresponding
polynomial, we can assume that $f$ is normal modulo the ideal
$L = \langle x_1^q - x_1,\ldots,x_r^q - x_r \rangle$, that is, all exponents
in the monomials of $f$ are strictly less than $q$.

\begin{definition}
A {\em difference stream cipher} $\cC$ is an explicit difference system $(\ref{dsys})$
together with a polynomial $f\in \bR$.
If $(a_1,\ldots,a_n)$ is a $\KK$-solution of $(\ref{dsys})$, its initial state
is called the {\em key of $(a_1,\ldots,a_n)$}. Moreover, if $v(t)\in\KK^r$
$(r = r_1 + \ldots + r_n)$ is the $t$-state of $(a_1,\ldots,a_n)$, the function
$b:\NN\to\KK$ such that $b(t) = f(v(t))$ for all $t\geq 0$, is called
the {\em keystream of $(a_1,\ldots,a_n)$}. Finally, we call $f$
the {\em keystream polynomial of the cipher $\cC$}.
\end{definition}

If $(\ref{dsys})$ is linear, that is, a system of LFSRs, the polynomial $f$ is required
non-linear and it is usually called a {\em combining or filtering function}.
Observe that a difference stream cipher can also be defined as a special explicit
difference system
\[
\left\{
\begin{array}{ccc}
x_1(r_1) & = & f_1, \\
& \vdots \\
x_n(r_n) & = & f_n, \\
y(0) & = & f.
\end{array}
\right.
\]
In fact, by a $\KK$-solution $(a_1,\ldots,a_n,b)$ of such a system
one obtains the keystream function $b:\NN\to\KK$ of the $\KK$-solution
$(a_1,\ldots,a_n)$ of $(\ref{dsys})$.

In Cryptography, a stream cipher (see, for instance, \cite{Kl}) operates simply
by adding and subtracting the keystream to a {\em stream of plaintexts or
ciphertexts}. Such a stream is by definition a function $\NN\to\KK$.
By a {\em known plaintext attack} we can assume the knowledge of the keystream
as the difference between the known ciphertext and plaintext streams.
Note that the keystream is usually provided by a stream cipher after
a sufficiently high number of clocks in order to prevent cryptanalysis.

\begin{definition}
\label{stream}
Let $\cC$ be a difference stream cipher consisting of the system $(\ref{dsys})$
and the keystream polynomial $f$. Let $b:\NN\to\KK$ be the keystream of a
$\KK$-solution of $(\ref{dsys})$ and fix an integer $T\geq 0$. Consider the ideal
\[
J = \sum_{t\geq T} \langle \sigma^t(f) - b(t) \rangle\subset R
\]
and denote by $V_\KK(J)$ the set of simultaneous $\KK$-solutions of all
polynomials in $J$, or equivalently, of its generators. An {\em algebraic attack
to $\cC$ by the keystream $b$ after $T$ clocks} consists in computing
the $\KK$-solutions $(a_1,\ldots,a_n)$ of the system $(\ref{dsys})$ such that
$(a_1,\ldots,a_n)\in V_\KK(J)$. In other words, if we consider the difference
ideal corresponding to $(\ref{dsys})$, that is,
$I = \langle x_1(r_1) - f_1,\ldots,x_n(r_n) - f_n \rangle_\sigma\subset R$
then one wants to compute $V_\KK(I + J) = V_\KK(I)\cap V_\KK(J)$.
\end{definition}

Since the given function $b$ is the keystream of a $\KK$-solution of $(\ref{dsys})$,
say $(a_1,\ldots,a_n)$, we have that $(a_1,\ldots,a_n)\in
V_\KK(I + J)\neq\emptyset$. For actual ciphers, we generally have that
$V_\KK(I + J) = \{(a_1,\ldots,a_n)\}$. 

\begin{definition}
With the notation of Definition \ref{stream}, denote by $\bV_\KK(I + J)\subset \KK^r$
the set of keys, that is, initial states of the $\KK$-solutions
$(a_1,\ldots,a_n)\in V_\KK(I + J)$. By Theorem \ref{unique}, there is a bijective
map $V_\KK(I + J)\to \bV_\KK(I + J)$ and we have that $\bV_\KK(I) = \KK^r$.
\end{definition}

\begin{theorem}
\label{keyeq}
Let $\bT:\bR\to \bR$ be the state transition
endomorphism of the system $(\ref{dsys})$ and put $f'_t = \bT^t(f)\in \bR$,
for all $t\geq 0$. Moreover, define the ideal
\[
J' = \sum_{t\geq T} \langle f'_t - b(t) \rangle\subset \bR.
\]
Then, we have that $\bV_\KK(I + J) = V_\KK(J')$.
\end{theorem}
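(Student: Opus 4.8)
The plan is to turn the (infinitely many) conditions that cut out $V_\KK(I+J)$ into conditions on the initial state alone, using the fundamental identity $(\ref{fund})$ together with Theorem \ref{main}, and then to transport everything along the bijection $\iota$ of Theorem \ref{unique}.

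First I would unwind the definitions. Since $V_\KK(I+J) = V_\KK(I)\cap V_\KK(J)$ and $V_\KK(I) = V_\KK$, a tuple $(a_1,\ldots,a_n)$ lies in $V_\KK(I+J)$ exactly when it is a $\KK$-solution of $(\ref{dsys})$ and, in addition, each generator $\sigma^t(f) - b(t)$ ($t\geq T$) of $J$ vanishes on it. Because $f\in \bR = \KK[\bX]$, the polynomial $\sigma^t(f)$ involves only the variables $x_i(j)$ with $t\leq j\leq t+r_i-1$, so substituting the values $a_i(j)$ for those variables yields precisely $f$ evaluated at the $t$-state $v(t) = (a_1(t),\ldots,a_1(t+r_1-1),\ldots,a_n(t),\ldots,a_n(t+r_n-1))$. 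Hence the extra condition on a $\KK$-solution is just $f(v(t)) = b(t)$ for all $t\geq T$.

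Next I would re-express $f(v(t))$ through the key $v(0)$. By Theorem \ref{unique} the state transition map $\T$ carries the $t$-state to the $(t+1)$-state, so $v(t) = \T^t(v(0))$; iterating $(\ref{fund})$ gives $\bT^t(f)(v) = f(\T^t(v))$ for every $v\in\KK^r$, that is $f(v(t)) = f'_t(v(0))$ with $f'_t = \bT^t(f)$. Therefore $(a_1,\ldots,a_n)\in V_\KK(I+J)$ if and only if it is a $\KK$-solution of $(\ref{dsys})$ whose key $w = v(0)$ satisfies $f'_t(w) = b(t)$ for all $t\geq T$, i.e. $w\in V_\KK(J')$, since $J'$ is generated by the polynomials $f'_t - b(t)$. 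Finally I would conclude with Theorem \ref{unique}: $\iota$ sends each element of $V_\KK(I+J)$ to its key, which lies in $V_\KK(J')$ by the above, giving $\bV_\KK(I+J)\subseteq V_\KK(J')$; conversely, for $w\in V_\KK(J')$ the unique $\KK$-solution of $(\ref{dsys})$ with initial state $w$ satisfies the characterization and has key $w$, giving the reverse inclusion. Hence $\bV_\KK(I+J) = V_\KK(J')$.

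The one point requiring care is the bookkeeping in the middle step: making sure that substituting a $\KK$-solution into $\sigma^t(f)\in R$ really computes $f$ at the shifted state vector $v(t)\in\KK^r$, and that the iterate of $(\ref{fund})$ is applied with the right shift. An equivalent, more algebraic phrasing avoids this: by Theorem \ref{main} one has $\sigma^t(f) - f'_t\in I$, so $I+J$ equals the ideal generated by $I$ together with the $f'_t - b(t)\in\bR\subset R$, and then the isomorphism $\eta':R/I\to\bR$ identifies its $\KK$-solutions with those of $J'$; I expect the evaluation argument above to be the cleaner one to write out in full.
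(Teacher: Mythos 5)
Your proposal is correct and follows essentially the same route as the paper: the paper's proof is exactly the iterated identity $f'_t(v(0)) = \bT^t(f)(v(0)) = f(\T^t(v(0))) = f(v(t))$ from $(\ref{fund})$, which converts the condition $f(v(t)) = b(t)$ for $t\geq T$ into $f'_t(v(0)) = b(t)$, combined implicitly with the bijection of Theorem \ref{unique}. Your write-up merely makes explicit the two bookkeeping steps (that evaluating $\sigma^t(f)$ on a solution computes $f$ at the $t$-state, and the transport along $\iota$) that the paper leaves tacit.
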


\begin{proof}
Let $(a_1,\ldots,a_n)$ be a $\KK$-solution of (\ref{dsys}) and denote by $v(t)$
its $t$-state ($t\geq 0$). By the identity (\ref{fund}), one obtains that
\[
f'_t(v(0)) = \bT^t(f)(v(0)) = f(\T^t(v(0))) = f(v(t)).
\]
We conclude that the condition $f(v(t)) = b(t)$ ($t\geq T$) is equivalent to
the condition $f'_t(v(0)) = b(t)$.
\end{proof}

By assuming that $x_i(r_i)\succ \lm(f_i)$ for all $1\leq i\leq n$, that is,
$G = \{x_1(r_1) - f_1, \ldots, x_n(r_n) - f_n\}$ is a difference \Gr\ basis
of the difference ideal $I = \langle G \rangle_\sigma$, by Theorem \ref{main}
one obtains that
\[
f'_t = \bT^t(f) = \NF_I(\sigma^t(f)).
\]

In actual algebraic attacks, we are given a finite number of values of the
keystream $b$, that is, for a fixed integer bound $B\geq T$, we consider
the finitely generated ideal
\[
J'_B = \sum_{T\leq t\leq B} \langle f'_t - b(t) \rangle\subset \bR.
\]
We have that $J' = \bigcup_{B\geq T} J'_B$ where $J'_B\subset J'_{B+1}$.
Since the polynomial algebra $\bR$ is finitely generated and hence
Noetherian, one has that $J'_B = J'$ for some $B\geq T$. In other words, we don't
lose any equation satisfied by the keys if a sufficiently large number of keystream
values is provided for the attack.

To compute the set $V_\KK(J'_B)\subset \KK^r$ one can use essentially \Gr\ bases
or SAT solvers when $\KK = \GF(2)$ (see, for instance, \cite{Ba}).
For real ciphers, one generally has that $V_\KK(J'_B) = V_\KK(J')$
contains a single $\KK$-solution, that is, a single key. The Nullstellensatz
over finite fields (see, for instance \cite{Gh}) implies the following result.

\begin{proposition}
\label{uniqueGB}
Let $\KK = \GF(q)$ be a finite field and consider the polynomial algebra
$P = \KK[x_1,\dots,x_r]$ and the ideal
$L = \langle x_1^q - x_1, \ldots, x_r^q - x_r \rangle\subset P$.
Moreover, let $J\subset P$ be any ideal and denote by $V(J)$ the set
of $\bar{\KK}$-solutions of all polynomials $f\in J$ where the field $\bar{\KK}$
is the algebraic closure of $\KK$. We have that $V(L) = \KK^r$
and $V_\KK(J) = V(J)\cap \KK^r = V(J + L)$ where $J + L\subset P$ is a radical
ideal. Moreover, if $V_\KK(J) = \{(\alpha_1,\ldots,\alpha_r)\}$ then
$G = \{x_1 - \alpha_1,\ldots,x_r - \alpha_r\}$ is the (reduced) universal
\Gr\ basis of $J + L$, that is, its \Gr\ basis with respect to all monomial
orderings of $P$.
\end{proposition}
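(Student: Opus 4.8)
The plan is to separate the statement into its successive assertions and prove them in order; the substantive point is the radicality of $J+L$, which reflects the special structure of the coordinate ring $P/L$ over a finite field.

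To begin, $V(L) = \KK^r$ because a point $(\alpha_1,\ldots,\alpha_r)\in\bar{\KK}^r$ annihilates all the generators $x_i^q-x_i$ exactly when each $\alpha_i$ is a root of $x^q-x$ in $\bar{\KK}$, and those roots are precisely the $q$ elements of $\GF(q)=\KK$. The identity $V_\KK(J)=V(J)\cap\KK^r$ is the definition of $V_\KK$, while $V(J+L)=V(J)\cap V(L)=V(J)\cap\KK^r$ by the previous sentence, which yields the displayed chain of equalities.

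The heart of the proof is to identify $P/L$ with a finite product of copies of $\KK$. The monomials that are normal modulo $L$ are exactly those in which every exponent is $<q$, so $\dim_\KK(P/L)=q^r$; on the other hand, Lagrange interpolation over $\GF(q)$ shows that the evaluation homomorphism $P/L\to\mathrm{Fun}(\KK^r,\KK)$, $f+L\mapsto(v\mapsto f(v))$, is surjective, and since $\dim_\KK\mathrm{Fun}(\KK^r,\KK)=q^r$ as well it is an isomorphism $P/L\cong\prod_{v\in\KK^r}\KK$. Every ideal of a finite product of fields is a sub-product of the factors, so every quotient of $P/L$ is again such a product, hence reduced; applying this to the quotient $P/(J+L)$ of $P/L$ proves that $J+L$ is radical for an arbitrary ideal $J$. (Seidenberg's lemma gives an alternative route: $J+L$ contains, for each $i$, the squarefree polynomial $x_i^q-x_i=\prod_{\alpha\in\KK}(x_i-\alpha)$, and this alone forces the ideal to be radical.)

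For the last assertion, assume $V_\KK(J)=\{(\alpha_1,\ldots,\alpha_r)\}$ and write $G=\{x_1-\alpha_1,\ldots,x_r-\alpha_r\}$. Under the isomorphism $P/L\cong\prod_{v}\KK$, the image of the ideal $J+L$ is the sub-product of functions vanishing on $V_\KK(J+L)=V_\KK(J)=\{(\alpha_1,\ldots,\alpha_r)\}$, so $P/(J+L)\cong\KK$ via evaluation at $(\alpha_1,\ldots,\alpha_r)$; hence each $x_i-\alpha_i$ lies in $J+L$ and, this ideal being maximal, $J+L=\langle G\rangle$. (Equivalently: since $J+L$ is radical with $V(J+L)=\{(\alpha_1,\ldots,\alpha_r)\}$, the Nullstellensatz over finite fields \cite{Gh} gives the same equality.) It remains to check that $G$ is the reduced \Gr\ basis of $\langle G\rangle$ for every monomial ordering $\prec$ of $P$. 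For any such $\prec$ one has $\lm(x_i-\alpha_i)=x_i$, since $x_i\succ 1$ by property (iii) of Definition \ref{monord}, so $\langle\lm(G)\rangle=\langle x_1,\ldots,x_r\rangle$; conversely every nonzero $f\in\langle G\rangle$ is non-constant, because a proper ideal meets $\KK$ only in $0$, hence $\lm(f)\neq 1$ and is divisible by some $x_i$. Therefore $\LM(\langle G\rangle)=\langle x_1,\ldots,x_r\rangle=\langle\lm(G)\rangle$, which is exactly the statement that $G$ is a \Gr\ basis of $\langle G\rangle$ with respect to $\prec$, and $G$ is reduced because each $x_i-\alpha_i$ is monic in its leading term and its monomials $x_i$ and $1$ are divisible by no $x_j$ with $j\neq i$. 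As $\prec$ was arbitrary, $G$ is the reduced universal \Gr\ basis of $J+L$. The step I expect to be the real obstacle is the description of $P/L$ as a product of fields and the resulting radicality of $J+L$: the remaining steps are essentially definition-chasing together with the Nullstellensatz, whereas this is where the finiteness of $\KK$ enters in an essential way.
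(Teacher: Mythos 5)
Your proof is correct, and it is genuinely more detailed than what the paper does: the paper offers no proof at all, simply asserting that the statement follows from the Nullstellensatz over finite fields with a citation to \cite{Gh}. Your route through the isomorphism $P/L\cong\prod_{v\in\KK^r}\KK$ (surjectivity of evaluation by Lagrange interpolation, injectivity by the dimension count $q^r=q^r$) is self-contained and does the real work explicitly: it yields radicality of $J+L$ for free, since every quotient of a finite product of fields is reduced, and it identifies $J+L$ with the maximal ideal $\langle x_1-\alpha_1,\ldots,x_r-\alpha_r\rangle$ when $V_\KK(J)$ is a single point, which is exactly the content of the cited Nullstellensatz in this special case. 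The concluding verification that $\{x_1-\alpha_1,\ldots,x_r-\alpha_r\}$ is the reduced \Gr\ basis for every monomial ordering is also sound: $\lm(x_i-\alpha_i)=x_i$ by property (iii) of Definition \ref{monord}, every nonzero element of the proper ideal is non-constant so its leading monomial lies in $\langle x_1,\ldots,x_r\rangle$, and reducedness is immediate since the only other monomial occurring is $1$. What the paper's terse citation buys is brevity; what your argument buys is a transparent explanation of where finiteness of $\KK$ enters (the product-of-fields structure of the coordinate ring), plus the Seidenberg-lemma alternative for radicality, either of which would serve a reader better than the bare reference.
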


The above result is very useful for the algebraic attacks because \Gr\ bases
computations are very sensitive to the monomial orderings and we are free
here to choose the most efficient orderings such as DegRevLex. Another possible
optimization when performing the Buchberger algorithm on the ideal $J + L\subset P$
consists in skipping all remaining S-polynomials once each variable $x_i$
($1\leq i\leq r$) has been obtained as the leading monomial of an element
in the current \Gr\ basis. In some cases, this trick speeds up the computation
in a significant way.

Let $\cC$ be a difference stream cipher which is given by the system $(\ref{dsys})$
and the keystream polynomial $f$. The polynomial $f'_t = \bT^t(f)\in \bR$ generally
has a high degree if $T$ is large with respect to $0$. This is usually
the case in actual ciphers where a high number of clocks is required before
the keystream appears. Nevertheless, if the system $(\ref{dsys})$ is invertible
we can always assume that $T = 0$. In fact, by means of the notion of inverse
system in Definition \ref{invsys}, to compute the $T$-state is completely
equivalent to compute the initial state, that is, the key of a $\KK$-solution
of $(\ref{dsys})$. This is a very effective optimization because it drastically
reduces the degrees of the generators of the ideal
$J'_B = \sum_{T\leq t\leq B} \langle f'_t - b(t) \rangle$
to those of the generators of the ideal
$J''_B = \sum_{0\leq t\leq B-T} \langle f'_t - b(T+t) \rangle$.
Recall that we have to compute a \Gr\ basis for obtaining $\KK$-solutions
and such computations are very sensitive to the degree of the generators.
We apply this strategy when attacking the stream cipher \Biv\ in Section 6.
If the polynomials $f'_t$ have still high degrees and they are even difficult
to compute, an alternative strategy consists in computing directly
the $\KK$-solutions of the system $(\ref{dsys})$ which are also solutions
of the fixed degree polynomials $\sigma^t(f) - b(T+t)$ $(0\leq t\leq B-T)$.
Even though the clocks of the variables in $X$ can be bounded, this strategy
has the main drawback that one has to compute a \Gr\ basis over a generally
high number of variables.

We have just observed that difference stream ciphers that are defined by invertible
systems show some lack of security with respect to algebraic attacks. On the other
hand, invertible systems can be used to define block ciphers.

\begin{definition}
A {\em difference block cipher} $\cC$ is a reducible invertible system
$(\ref{dsys})$ together with an integer $T\geq 0$. If $(\ref{subdsys})$ is
the subsystem of $(\ref{dsys})$, we put $k = r_1 + \ldots + r_m$ and
$l = r_{m+1} + \ldots + r_n$. If a $t$-state of a $\KK$-solution $(a_1,\ldots,a_n)$
of $(\ref{dsys})$ is denoted as the pair $(u(t),v(t))\in\KK^k\times\KK^l = \KK^r$,
we call $u(0)$ the {\em key}, $v(0)$ the {\em plaintext} and $v(T)$
the {\em ciphertext of $(a_1,\ldots,a_n)$}. Moreover, we call $(u(T),v(T))$
the {\em final state of $(a_1,\ldots,a_n)$} and $(\ref{subdsys})$
the {\em key subsystem of the cipher $\cC$}.
\end{definition}

With the language of Cryptography, the {\em encryption function}
$E_{u(0)}:\KK^l\to\KK^l$ of the difference block cipher $\cC$ is given by
the map $v(0)\mapsto v(T)$, where the pair $(u(0),v(0))$ varies in the affine
space $\KK^k\times\KK^l$ of all initial states of the $\KK$-solutions
of $(\ref{dsys})$. To provide the decryption function we introduce the following
notion.

\begin{definition}
Let $\cC$ be a difference block cipher consisting of a reducible invertible
system $(\ref{dsys})$ and a clock $T\geq 0$. The {\em inverse cipher of $\cC$}
is by definition the inverse system of $(\ref{dsys})$ together with $T$.
\end{definition}

Let $\cC'$ be the inverse cipher of $\cC$ where $(\ref{invdsys})$ is the inverse
system of $(\ref{dsys})$. Consider also the key subsystem $(\ref{subdsys})$ of $\cC$.
If $u(0)$ is the key of a solution of $(a_1,\ldots,a_n)$ of $(\ref{dsys})$,
we can compute $u(T)$ by means of $(\ref{subdsys})$ without knowing $v(0)$.
If we are given the ciphertext $v(T)$, we have hence the final state $(u(T),v(T))$
of $(a_1,\ldots,a_n)$. By Proposition \ref{invstat}, the inverse system
$(\ref{invdsys})$ provides the computation of the initial state $(u(0),v(0))$
of $(a_1,\ldots,a_n)$ and in particular of the plaintext $v(0)$.
In other words, the decryption function $D_{u(0)}:\KK^l\to\KK^l$ is obtained as
the map $v(T)\mapsto v(0)$ which is computable by means of the systems $(\ref{invdsys}),
(\ref{subdsys})$.

\begin{definition}
\label{block}
Let $\cC$ be a difference block cipher given by a reducible invertible system
$(\ref{dsys})$ and a clock $T\geq 0$. For all $t\geq 0$,
let $(u(t),v(t))\in\KK^k\times\KK^l$ be the $t$-state of a $\KK$-solution
of $(\ref{dsys})$ where we denote
\[
v(t) =
(a_{m+1}(t),\ldots,a_{m+1}(t+r_{m+1}-1),\ldots,a_n(t),\ldots,a_n(t+r_n-1)).
\]
Consider the corresponding linear ideal
\[
J(t) = \sum_{m+1\leq i\leq n}
\langle x_i(t) - a_i(t),\ldots, x_i(t+r_i-1) - a_i(t+r_i-1) \rangle\subset R
\]
and put $J = J(0) + J(T)$. An {\em algebraic attack to $\cC$ by the
plaintext-ciphertext pair $(v(0),v(T))$} consists in computing the $\KK$-solutions
$(a_1,\ldots,a_n)$ of the system $(\ref{dsys})$ such that
$(a_1,\ldots,a_n)\in V_\KK(J)$. If $I =
\langle x_1(r_1) - f_1,\ldots,x_n(r_n) - f_n \rangle_\sigma\subset R$,
this is equivalent to compute $V_\KK(I + J) = V_\KK(I)\cap V_\KK(J)$.
\end{definition}

Note that the above attack belongs to the class of known plaintext attacks.
Since the given pair $(v(0),v(T))$ is obtained by the states $(u(t),v(t))$
of a $\KK$-solution of $(\ref{dsys})$, say $(a_1,\ldots,a_n)$, we have that
$(a_1,\ldots,a_n)\in V_\KK(I + J)\neq\emptyset$. For actual ciphers,
one generally has that the set $V_\KK(I + J)$ contains more than one $\KK$-solution.
Since computing a unique solution by a single DegRevLex-\Gr\ basis can be faster
than calculating multiple solutions via conversion to an elimination ordering
(FGLM algorithm \cite{FGLM}), we prefer to obtain uniqueness by attacking with multiple
plaintext-ciphertext pairs.

Precisely, fix an integer $s > 1$ and let $(u(t),v^{(i)}(t))$ ($1\leq i\leq s$)
be the $t$-state of a $\KK$-solution $(a_1,\ldots,a_m,a^{(i)}_{m+1},\ldots,a^{(i)}_n)$
of the system (\ref{dsys}) where $(a_1,\ldots,a_m)$ is some fixed $\KK$-solution
of the key subsystem (\ref{subdsys}). In other words, we consider some
plaintext-ciphertext pairs $(v^{(i)}(0),v^{(i)}(T))$ ($1\leq i\leq s$)
which are obtained by the same key $u(0)$. To describe properly a multiple
pairs attack, we also need the following notations.

For each $i = 1,2,\ldots,s$ and $t\geq 0$, consider the set of variables
$X^{(i)}(t) = \{x_1(t),\ldots,x_m(t), x^{(i)}_{m+1}(t),\ldots,x^{(i)}_n(t)\}$
where $\bigcap_i X^{(i)}(t) = \{x_1(t),\ldots,x_m(t)\}$. We put
$X^{(i)} = \bigcup_{t\geq 0} X^{(i)}(t)$ and $R^{(i)} = \KK[X^{(i)}]$.
The polynomial algebra $R^{(i)}$ is clearly isomorphic to $R$ and we denote by
$I^{(i)}\subset R^{(i)}$ the ideal which is isomorphic to $I\subset R$.
For all $1\leq i\leq s$, consider (\ref{dsys}) as written
in the variables $X^{(i)}$ and let $J^{(i)}(t)\subset R^{(i)}$ ($t\geq 0$)
be the linear ideal corresponding to the $t$-state of the $\KK$-solution
$(a_1,\ldots,a_m,a^{(i)}_{m+1},\ldots,a^{(i)}_n)$ of the system (\ref{dsys}).
Moreover, we put $X' = \bigcup_i X^{(i)}, R' = \KK[X']$ and we denote by
$I',J'(t)$ the ideals of $R'$ that are generated by $\sum_i I^{(i)},
\sum_i J^{(i)}(t)$, respectively. Finally, we put $J' = J'(0) + J'(T)$.

\begin{definition}
Let $\cC$ be a difference block cipher given by a reducible invertible system
$(\ref{dsys})$ and a clock $T\geq 0$. An {\em algebraic attack to $\cC$
by the multiple plaintext-ciphertext pairs $(v^{(i)}(0),v^{(i)}(T))$
($1\leq i\leq s$)} consists in computing the $\KK$-solutions
\[
(a_1,\ldots,a_m,a^{(1)}_{m+1},\ldots,a^{(s)}_{m+1},\ldots,
a^{(1)}_n,\ldots,a^{(s)}_n)
\in V_\KK(I' + J').
\]
\end{definition}

For real ciphers, a sufficiently large number of pairs implies that
we have $V_\KK(I' + J') = \{(a_1,\ldots,a_m,a^{(1)}_{m+1},\ldots,a^{(s)}_{m+1},
\ldots,a^{(1)}_n,\ldots,a^{(s)}_n)\}$. By bounding the clocks of the variables
in $X'$, one can compute this unique $\KK$-solution and hence its key
$u(0) = (a_1(0),\ldots,a_1(r_1-1),\ldots,a_m(0),\ldots,a_m(r_m-1))$
using a \Gr\ basis computation as in Proposition \ref{uniqueGB}.
Alternatively, for $\KK = \GF(2)$ one can use SAT solvers or other methods.
In Section 7 we make use of multiple pairs when attacking the block cipher \Kee.

Since the final clock $T$ is usually chosen a large one, the main drawback
of this approach is the high number of variables. Despite this, such a strategy
generally appears to be more viable than elimination techniques. Indeed, the normal
forms of the generators of $J'(T)$ modulo $I' + J'(0)$ belong to $\bR_m$
but they may have very high degrees because of the large clock $T$. As for stream
ciphers, the main problem is hence to reduce somehow the final clock $T$.
Even though the system (\ref{dsys}) of the block cipher $\cC$ is invertible,
note that we cannot attack an internal state instead of the initial one because
the set $V_\KK(I' + J'(t))$ ($t\leq T$) generally contains too many solutions.
In other words, a ciphertext only attack is generally too weak for difference block ciphers.

A better strategy is possible when the period of the key subsystem (\ref{subdsys}),
say $d$, is sufficiently small. This technique has been introduced in \cite{CBB,CBW}
to attack \Kee. To simplify its description, let us assume that the final state $T$
is a multiple of $d$. Consider the state transition map $\T:\KK^r\to\KK^r$
of the explicit difference system (\ref{dsys}) and denote by 
$\S:\KK^k\to\KK^k$ the state transition map of the subsystem (\ref{subdsys}).
Recall that $\S$ is just the restriction of the map $\T$ to the subspace
$\KK^k\subset\KK^r$ ($k = r_1 + \ldots + r_m, r = r_1 + \ldots + r_n$).
By definition of period, one has that $\S^d = \id$ and therefore $\S^d(u) = u$,
for all $u\in\KK^k$. For any $t\geq 0$, denote by $(u(t),v(t))\in\KK^k\times\KK^l = \KK^r$
the $t$-state of a $\KK$-solution $(a_1,\ldots,a_n)$ of the system (\ref{dsys}).
Then, the encryption function $E_{u(0)}:\KK^l\to\KK^l$ corresponding to the key
$u(0)\in\KK^k$ is the map $v(0)\mapsto v(T)$. By a chosen plaintext attack,
we can assume the knowledge of the bijection $E_{u(0)}$. If $\KK^l$ is a large space
($l\to\infty$), one has a probability equal to $1 - 1/e\approx 0.63$ (see \cite{CBB},
Section 4.1) that $E_{u(0)}$ has one or more fixed points $v(0) = v(T)$. Observe now that
$v(0) = v(d)$ implies that $v(0) = v(T)$. In fact, by definition $(u(d), v(d)) =
\T^d(u(0),v(0))$ and we have that $u(d) = \S^d(u(0)) = u(0)$. Then, from $v(0) = v(d)$
it follows that $(u(0), v(0)) = \T^d(u(0),v(0))$ and hence $(u(0), v(0)) =
\T^T(u(0),v(0))$ because $T$ is a multiple of $d$. We conclude that among
the fixed points of the encryption function $E_{u(0)}$ one has the fixed points
of the map $v(0)\mapsto v(d)$. If $v(0) = v(d)$ is such a fixed point, we have that
$(v(0),v(0)) = (v(0),v(d)) = (v(0),v(T))$, that is, $(v(0),v(0))$ is a
plaintext-ciphertext pair for the final clocks $d$ and $T$. In other words,
by means of such pairs we can perform an algebraic attack to the difference block
cipher $\cC$ assuming that the final clock is just the period of the key subsystem.

Let us conclude this section with a final general observation. When we apply Proposition
\ref{uniqueGB} for solving polynomial systems, an essential trick consists in adding some
linear polynomials to the considered ideal $J + L\subset \KK[x_1,\dots,x_r]$
($L = \langle x_1^q - x_1, \ldots, x_r^q - x_r \rangle$) in order to speed up
the \Gr\ basis computation. Such linear polynomials are either elements of $J$ which
are given or computed ones, or they correspond to the evaluations of some subset
of variables $\{x_{i_1},\ldots,x_{i_s}\}\subset\{x_1,\ldots,x_r\}$. If some of these
evaluations, say $x_{i_k} = \alpha_{i_k}$ ($\alpha_{i_k}\in\KK$), is wrong and
$V_\KK(J)$ contains a unique solution, one has that
\[
J + L + \langle x_{i_1} - \alpha_{i_1}, \ldots, x_{i_s} - \alpha_{i_s} \rangle =
\langle 1 \rangle
\]
and the \Gr\ basis computation stops as soon as the element 1 is obtained.
Note that using instead a SAT solver, the answer ``UNSAT'' essentially arrives
when the full space $\KK^r$ ($\KK = \GF(2)$) has been examined. This means that
for wrong evaluations, which are all except that one, \Gr\ basis solving is generally
faster than SAT solving. We have evidence of this in practice in Section 6.

Note that solving after the evaluation of some bunch of variables is usually
called a {\em guess-and-determine strategy} (see, for instance, \cite{EVP})
or a {\em hybrid strategy} (see \cite{BFP}) in the case of (semi-)regular
polynomial systems. The latter case cannot be generally assumed for algebraic
attacks to difference ciphers because the polynomials we obtain by means
of elimination techniques such as Theorem \ref{keyeq} do not seem random
at all. For this reason, as in the paper \cite{EVP}, we prefer to consider
the experimental running time of a guess-and-determine strategy as the product
$a\cdot q^s$ where $a$ is the average solving time for a single guess
and $q^s$ is the number of guesses of $s$ variables when $\KK = \GF(q)$.
In other words, the complexity of such a strategy is $\cO(q^s)$
where $q^s$ is the total number of solving processes to be performed
in a reasonable time.

Of course, by assuming with probability $\geq 1/2$ that the correct guess 
is found in half of the space $\KK^q$, we obtain that the average running time
is reduced to $a\cdot q^s/2$. We conclude by observing that the choice
of the variables to be evaluated is a very important issue to optimize
a guess-and-determine strategy. The parallelization of the computation which
can be obtained simply by dividing the guess space in different subsets
is also a viable way to scale down the complexity.

\section{Attacking \Biv}

Aiming to illustrate by a concrete example how to perform in practice
an algebraic attack to a difference stream cipher as described in the previous
section, we start considering \Tri\ which is a well-known stream cipher
designed in 2003 by De Canni\`ere and Preneel as a submission to European
project eSTREAM \cite{DCP}. In fact, \Tri\ was one of the winners
of the project for the category of hardware-oriented ciphers. Even though
it has been widely cryptanalysed, no critical attacks are known up to date.
The system of explicit difference equations describing \Tri\ looks quite simple
since it consists only of three quadratic equations over the base field $\KK = \GF(2)$,
namely

\begin{equation}
\label{tri}
\left\{
\begin{array}{rcl}
x(93) & = & z(0) + x(24) + z(45) + z(1)z(2), \\
y(84) & = & x(0) + y(6) + x(27) + x(1)x(2), \\
z(111) & = & y(0) + y(15) + z(24) + y(1)y(2). \\
\end{array}
\right.
\end{equation}
Its keystream polynomial is a homogeneous linear one
\begin{equation*}
\begin{array}{rcl}
f & = & x(0) + x(27) + y(0) + y(15) + z(0) + z(45).
\end{array}
\end{equation*}
Therefore, a $t$-state consists of $288 = 93 + 84 + 111$ bits, for any clock $t\geq 0$.
The keystream bits are known by the attackers starting with clock
$T = 4\cdot 288 = 1152$. The key and the initial vector of \Tri\ are 80 bit
vectors and they form together 160 bits of an initial state. The remaining 128 bits
are fixed ones.

By Corollary \ref{dinvco}, we obtain that the system (\ref{tri}) is invertible
with inverse system
\begin{equation*}
\left\{
\begin{array}{rcl}
x(93) & = & y(0) + x(66) + y(78) + x(91) x(92), \\
y(84) & = & z(0) + y(69) + z(87) + y(82) y(83), \\
z(111) & = & x(0) + z(66) + x(69) + z(109) z(110). \\
\end{array}
\right.
\end{equation*}
This allows an algebraic attack to the $T$-state instead of the initial state
containing the key and the initial vector. The problem with such an attack
is the high number (288) of variables in the set
\[
\bX = \{x(0),\ldots,x(92),y(0),\ldots,y(83),z(0),\ldots,z(100)\}.
\]
Indeed, to solve the polynomial system obtained by means of Theorem \ref{keyeq}
using a guess-and-determine strategy, one has to evaluate a number
of variables that exceeds the length of the key which is 80 bit.
In other words, providing that all solving computations for any guess
can actually be performed in a reasonable time, one has a complexity
which is greater than the key recovery by exhaustive search. An experimental evidence
of this is contained, for instance, in \cite{EVP,HL}. We also tried ourselves with
a time limit of one hour for the \Gr\ bases computations. We plan to further
improve our guess-and-determine strategies to tackle \Tri.

Therefore, we present here all optimizations and computational data that we have
obtained for a well-studied simplified version of \Tri\ cipher which is called \Biv.
For this cipher we obtain a running time for an algebraic attack which improves
a previous one \cite{EVP} and it is much better than brute force.

The explicit difference system defining \Biv\ are the following two quadratic equations
\begin{equation}
\label{biv}
\left\{
\begin{array}{rcl}
x(93) & = & y(0) + y(15) + x(24) + y(1) y(2), \\
y(84) & = & x(0) + y(6) + x(27) + x(1) x(2), \\
\end{array}
\right.
\end{equation}
and its keystream polynomial is
\begin{equation*}
\begin{array}{rcl}
f & = & x(0) + x(27) + y(0) + y(15).
\end{array}
\end{equation*}
In this case, the $t$-states are vectors of $93 + 84 = 177$ bits and the keystream
starts at clock $T = 4\cdot 177 = 708$. Again, the key and the initial vector
are $80 + 80 = 160$ bits of the initial state. Corollary \ref{dinvco} implies that
the system (\ref{biv}) is invertible with inverse
\begin{equation*}
\left\{
\begin{array}{rcl}
x(93) & = & y(0) + x(66) + y(78) + x(91) x(92), \\
y(84) & = & x(0) + x(69) + y(69) + y(82) y(83). \\
\end{array}
\right.
\end{equation*}
Consider the polynomial algebras $R = \KK[X]$ where $X = \bigcup_{t\geq 0} \{x(t),y(t)\}$
and $\bR = \KK[\bX]$ where $\bX = \{x(0),\ldots,x(92),y(0),\ldots,y(83)\}$.
If $R$ is endowed with a clock-based monomial ordering, we have that
\begin{equation*}
\begin{gathered}
G = \{ x(93) + y(0) + y(15) + x(24) + y(1) y(2), \\
       y(84) + x(0) + y(6) + x(27) + x(1) x(2) \}
\end{gathered}
\end{equation*}
is a difference \Gr\ basis of the difference ideal $I = \langle G\rangle_\sigma\subset R$.
Consider the polynomials $f'_t = \bT^t(f) = \NF_I(\sigma^t(f))\in \bR$ ($t\geq 0$)
and let $b:\NN\to\KK$ be a keystream. Since the system (\ref{biv}) is invertible, we can
attack the $T$-state by the ideal
\[
J''_B = \sum_{0\leq t\leq B-T} \langle f'_t + b(T + t) \rangle\subset \bR.
\]
We have found experimentally that if the number $B-T+1$ of known values of the keystream is
approximately 180, one has a unique $\KK$-solution in $V_\KK(J''_B)$. In this case,
the maximal degree of the generators of $J''_B$ is 3. Even though we use a DegRevLex
monomial ordering on $\bR$, a \Gr\ basis of the ideal $J''_B + L\subset \bR$
where
\[
L = \langle x(0)^2 + x(0), \ldots, x(92)^2 + x(92), y(0)^2 + y(0), \ldots,
y(83)^2 + y(83) \rangle
\]
seems to be hard to compute. Indeed, we have experimented that the number of S-polynomials
for such a computation increases in a very fast way affecting time and space complexity.
We therefore use a guess-and-determine strategy to obtain solving times that we can
actually determine.

Since all shifts $\sigma^t(f)$ ($0\leq t\leq 65$) of the keystream polynomial
$f = x(0) + x(27) + y(0) + y(15)$ are normal modulo $I$, we have 66 linear
polynomials $\sigma^t(f) - b(T + t) = f'_t - b(T + t)\in J''_B$. Because the clocks
of the variables in $f$ are all multiples of 3, we can divide these polynomials
into 3 sets of 22 linear polynomials, namely
\[
S_i = \{\sigma^t(f) - b(T + t)\mid 0\leq t\leq 65, t\equiv i\ \hbox{mod}\ 3\}\ (0\leq i\leq 2).
\]
By performing Gaussian elimination over $S_i$, we obtain 22 pivot variables and 36 free
variables. In other words, for any set $S_i$ the evaluation of 36 variables implies
the evaluation of $36 + 22 = 58$ variables. This is a good trick that was first
observed in \cite{MB}. In our computations, we choose the set $S_2$, that is, we guess
the following 36 free variables
\[
x(68), x(71), \ldots, x(92), y(2), y(5), \ldots, y(80)
\]
and we obtain the evaluation of the 22 pivot variables
\[
x(2), x(5), \ldots, x(65).
\]
Moreover, note that one has the polynomial $f'_{68} - b(T + t)\in J''_B$ where
\[
f'_{68} = y(83) + x(68) + y(68) + x(26) + y(17) + y(4) y(3) + y(2).
\]
By guessing the variables $y(3),y(4)$ together with the previous 36 ones, one
obtains in fact the evaluation of the variable $y(83)$, that is, a total of $61$
evaluations out of the 177 variables of the algebra $\bR$. This is enough
to have \Gr\ bases computations that last only a few tenths of seconds.
Precisely, our guess-and-determine strategy for \Biv\ consists in computing the \Gr\ bases
of all ideals $J''_B + L + E_{\alpha_1,\ldots,\alpha_{38}}\subset \bR$, where
\begin{equation}
\label{guess}
\begin{gathered}
E_{\alpha_1,\ldots,\alpha_{38}} = \langle x(68) + \alpha_1, x(71) + \alpha_2,
\ldots, x(92) + \alpha_9, y(2) + \alpha_{10}, \\
y(5) + \alpha_{11}, \ldots, y(80) + \alpha_{36},
y(3) + \alpha_{37}, y(4) + \alpha_{38} \rangle
\end{gathered}
\end{equation}
and the vector $(\alpha_1,\ldots,\alpha_{38})$ ranges in the space $\KK^{38}$.

We propose now tables where we compare the solving time to obtain the set
$V_\KK(J''_B + E_{\alpha_1,\ldots,\alpha_{38}})$ by using \Gr\ bases and SAT solvers.
For \Gr\ bases, we make use of two main implementations of the Buchberger
algorithm that are available in the computer algebra system {\sc Singular} \cite{DGPS},
namely {\sc std} and {\sc slimgb}. We have decided to use this free and open-source
system because of our long experience with it, to compare with a previous attack \cite{EVP}
based on the same \Gr\ bases implementations and finally because different implementations
would affect the average running time $a\cdot 2^{37}$ only by the factor $a$ corresponding
to average solving time. The considered SAT solvers are {\sc minisat} \cite{ES} and
{\sc cryptominisat} \cite{So} which are widely used in cryptanalysis.

We have carried out the computations on a server: Intel(R) Core(TM) $i7-8700$
CPU @ $3.20$GHz, $6$ Cores, $12$ Threads, $32$ GB RAM with a Debian based Linux
operating system. In our tables, we abbreviate milliseconds and seconds by ms and s,
respectively.

\begin{table}[ht!]
\centering
\caption{$90\%$ Confidence Interval for timings with random guesses}
\label{randGuesses}
\begin{tabular}{|c|c|c|c|c|}
\hline
$\#$ ks bits & slimgb (ms) & std (ms) & MiniSat (s) & CrMiniSat (s) \\
\hline
$180$ & $(160,~195)$ & $(326,~397)$ & $(9.33,~84.03)$ & $(9.61,~40.39)$ \\
\hline
$185$ & $(159,~170)$ & $(332,~367)$ & $(6.88,~60.24)$ & $(7.33,~28.41)$ \\
\hline
$190$ & $(119,~134)$ & $(353,~411)$ & $(6.94,~63.98)$ & $(9.59,~38.26)$ \\
\hline
$195$ & $(123,~138)$ & $(342,~397)$ & $(6.55,~67.57)$ & $(7.69,~25.26)$ \\
\hline
\end{tabular}
\end{table}

\begin{table}[ht!]
\centering
\caption{$90\%$ Confidence Interval for timings with correct guess}
\label{corrGuess}
\begin{tabular}{|c|c|c|c|c|}
\hline
$\#$ ks bits & slimgb (ms) & std (ms) & MiniSat (s) & CrMiniSat (s) \\
\hline 
$180$ & $(172,~187)$ & $(330,~350)$ & $(1.21,~53.81)$ & $(2.71,~33.32)$ \\
\hline
$185$ & $(178,~191)$ & $(352,~388)$ & $(0.15,~49.59)$ & $(0.24,~15.85)$ \\
\hline
$190$ & $(127,~145)$ & $(351,~411)$ & $(0.81,~24.56 )$ & $(0.23,~31.52)$ \\
\hline
$195$ & $(122,~135)$ & $(328,~348)$ & $(1.08,~36.50 )$ & $(6.72,~19.92)$ \\
\hline
\end{tabular}
\end{table}

In both tables, the rows correspond to different choices of the number of keystream
bits that are used for the attack. The second and third columns present the $90\%$
confidence intervals for Gr\"obner bases timings that are obtained by {\sc slimgb}
and {\sc std}. The fourth and fifth columns provide the intervals for SAT solvers
timings corresponding to {\sc minisat} and {\sc cryptominisat}.

In Table \ref{randGuesses}, the confidence intervals for \Gr\ bases are computed
for $2^4$ different random (key, iv)-pairs and $2^{10}$ different random guesses
of the $38$ variables in (\ref{guess}) for each (key, iv)-pair.
In other words, the confidence intervals contain 90\% of the timings that are
obtained by a total of $2^{14}$ computations. The intervals for SAT solvers
are computed for the same set of $2^4$ (key, iv)-pairs and with a subset of $2^4$
different random guesses from the set that we have considered for \Gr\ bases.
The motivation of such reduction is larger total computing times for SAT solving.

Similarly, in Table \ref{corrGuess} the confidence intervals are computed for the
same $2^4$ (key, iv)-pairs of Table \ref{randGuesses} and the correct guess
of the $38$ variables corresponding to each (key, iv)-pair. 

For \Biv\ attack, the tables show that the procedure {\sc slimgb} is faster
than {\sc std}. This happens because ``slim'', that is, compact elements
in the resulting \Gr\ bases imply faster S-polynomial reductions which are
the most expensive component of these computations.
Moreover, we have that \Gr\ bases perform better than SAT solvers
for computing solutions of the polynomial systems involved in the \Biv\ attack.
This is especially true for the UNSAT case which is dominant in complexity.
About 190 keystream bits are the best choice for our attack and we conclude
that its average running time is $0.12 \cdot 2^{37}\,s \sim 2^{34}\,s$.

This result improves the timing $2^{39}\,s$ of an algebraic attack to \Biv\
which is presented in \cite{EVP}. This attack uses a guess-and-determine
strategy based on the exhaustive evaluation of 42 variables and \Gr\ bases
computations which are obtained by the same {\sc Singular} routines {\sc std}
and {\sc slimgb} running on a comparable CPU.

\section{Attacking \Kee}

We present now an illustrative example of an algebraic attack to a concrete
difference block cipher. A well-known small size block cipher is \Kee\ which
has important applications in remote keyless entry systems which are used,
for instance, by the automotive industry. \Kee\ is a proprietary cipher \cite{Kee}
whose cryptographic algorithm was created by Gideon Kuhn at the University
of Pretoria in the mid-1980s. Starting from the mid-1990s, the cipher was widely
used by car manufactures but it has begun to be cryptanalysed only in 2007.
In particular, we mention the papers \cite{CBB,CBW} where there are algebraic
attacks to \Kee\ on which this section is based. For another important class
of ``meet-in-the-middle attacks'', see \cite{Pra}.

The block cipher \Kee\ is defined by a reducible invertible difference system
over the base field $\KK = \GF(2)$, where the key subsystem consists of a single
homogeneous linear equation (LFSR). In fact, its state transition $\KK$-linear map
corresponds to a cyclic permutation matrix of period 64. In addition to the key
equation, the invertible system consists of an explicit difference cubic equation
involving a single key variable. Precisely, the \Kee\ system is the following one

\begin{equation}
\label{kee}
\left\{
\begin{array}{rcl}
k(64) & = & k(0), \\
x(32) & = & x(0) + x(16) + x(9) + x(1) + x(20) x(31) \\
& & +\ x(1) x(31) + x(20) x(26) + x(1) x(26) + x(9) x(20) \\
& & +\ x(1) x(9) + x(1) x(9) x(31) + x(1) x(20) x(31) \\
& & +\ x(9) x(26) x(31) + x(20) x(26) x(31) + k(0).
\end{array}
\right.
\end{equation}

The key, that is, the initial state of the key equation, consists therefore of 64 bits
and the plaintext and ciphertext are 32 bits vectors. In other words, any $t$-state
of \Kee\ consists of $64 + 32 = 96$ bits. The final clock of this difference block cipher
is defined as the clock $T = 8\cdot 64 + 16 = 528$. Theorem \ref{invth} provides that
the system (\ref{kee}) is invertible with the following inverse system

\begin{equation}
\label{invkee}
\left\{
\begin{array}{rcl}
k(64) & = & k(0), \\
x(32) & = & x(0) + x(31) + x(23) + x(16) + x(23) x(31) \\
& & +\ x(6) x(31) + x(1) x(31) + x(12) x(23) + x(6) x(12) \\
& & +\ x(1) x(12) + x(1) x(23) x(31) + x(1) x(12) x(31) \\
& & +\ x(1) x(6) x(23) + x(1) x(6) x(12) + k(0).
\end{array}
\right.
\end{equation}

To describe a multiple plaintext-ciphertext pairs attack to \Kee, consider two such
pairs $(v',v''), (w',w'')\in \KK^{32}\times \KK^{32}$, where
$v' = (\alpha'_0,\ldots,\alpha'_{31}), v'' = (\alpha''_0,\ldots,\alpha''_{31})$ and
$w' = (\beta'_0,\ldots,\beta'_{31}), w'' = (\beta''_0,\ldots,\beta''_{31})$.
Then, define the polynomial algebra $R' = \KK[X']$ where $X' =
\bigcup_{t\geq 0} \{k(t),x(t),y(t)\}$ and consider the following set
\begin{equation*}
\begin{gathered}
G' = \{
k(64) + k(0), \\
x(32) + x(0) + x(31) + x(23) + x(16) + x(23) x(31) + x(6) x(31) \\
+\ x(1) x(31) + x(12) x(23) + x(6) x(12) + x(1) x(12) + x(1) x(23) x(31) \\
+\ x(1) x(12) x(31) + x(1) x(6) x(23) + x(1) x(6) x(12) + k(0), \\
y(32) + y(0) + y(31) + y(23) + y(16) + y(23) y(31) + y(6) y(31) \\
+\ y(1) y(31) + y(12) y(23) + y(6) y(12) + y(1) y(12) + y(1) y(23) y(31) \\
+\  y(1) y(12) y(31) + y(1) y(6) y(23) + y(1) y(6) y(12) + k(0)
\}.
\end{gathered}
\end{equation*}
We define the difference ideal $I' = \langle G'\rangle_\sigma\subset R'$
and the linear ideal $J' = J'(0) + J'(T)\subset R'$ where
\begin{equation*}
\begin{gathered}
J'(0) = \langle x(0) + \alpha'_0,\ldots, x(31) + \alpha'_{31},
               y(0) + \beta'_0,\ldots, y(31) + \beta'_{31} \rangle, \\
J'(T) = \langle x(T) + \alpha''_0,\ldots, x(T + 31) + \alpha''_{31},
               y(T) + \beta''_0,\ldots, y(T + 31) + \beta''_{31} \rangle.
\end{gathered}
\end{equation*}

An algebraic attack to \Kee\ by the plaintext-ciphertext pairs $(v',v'')$ and
$(w',w'')$ consists in computing $V_\KK(I' + J')$. Note that we can indeed
assume that the variables clocks are bounded by $T + 31$, that is, we solve over
the finite set of variables $\bigcup_{0\leq t\leq T + 31} \{k(t),x(t),y(t)\}$.
Actually, the computation of $V_\KK(I' + J')$ is unfeasible for $T = 528$.
If we would assume that $T = 512 = 8\cdot 64$, we could use the trick of fixed
pairs which is described at the end of Section 5. Briefly, if
$(u(t),v(t))\in\KK^{64}\times\KK^{32}$ denotes the $t$-state of the system
(\ref{kee}), the trick consists in computing enough fixed points $v(0) = v(512)$
by the knowledge of the encryption function and to assume that some of them
are in fact fixed points $v(0) = v(64)$.
By means of a couple of such plaintext-ciphertext pairs $(v(0),v(0))$, we are
reduced to compute $V_\KK(I' + J')$ for $T = 64$.

The problem now is how to compute $v(512)$ from the ciphertext $v(528)$.
If we assume that the variables $k(0),\ldots,k(15)$ are evaluated by 
the correct corresponding key bits, we can apply the inverse cipher (\ref{invkee})
since these bits are the only ones that are involved in the computation of $v(512)$
from $v(528)$. Indeed, the authors of \cite{CBB,CBW} have studied a property based
on the disjoint cycles decomposition which is able to distinguish a generic permutation
of the set $\KK^{32}$ from the \Kee\ encryption function reduced to $T = 512$ clocks.
By means of this method, the cost of the search of the correct values of the variables
$k(0),\ldots,k(15)$ is assumed to be $2^{52}$ CPU clocks in the worst case. With an
optimized implementation on our CPU @ $3.20$GHz, this method should then take
$a = 2^{21}$ sec.

Once obtained the correct values $\alpha_1,\ldots,\alpha_{15}\in\KK$ of the variables
$k(0),\ldots,k(15)$, we have experimented that to compute each set
$V_\KK(I' + J' + E_{\alpha_0,\ldots,\alpha_{15}})$ where $T = 64$ and
\[
E_{\alpha_0,\ldots,\alpha_{15}} =
\langle k(0) + \alpha_0, \ldots, k(15) + \alpha_{15} \rangle
\]
one needs just few tens of milliseconds by using \Gr\ bases or SAT solvers.
Note that this improves solving times obtained in \cite{CBB,CBW} which are hundreds
of milliseconds on a similar CPU.

We conclude that the total running time of this algebraic attack to \Kee\
can be described by the formula
\begin{equation}
\label{keetime}
a + b\cdot c\cdot 2^{32} + d
\end{equation}
where $b$ is the average encryption time for $T = 528$ clocks, $c$ is the average
percentage of the plaintext space $\KK^{32}$ containing enough fixed points $v(0) = v(512)$
and $d$ is the average computing time to obtain the sets
$V_\KK(I' + J' + E_{\alpha_0,\ldots,\alpha_{15}})$ for each couple of such
fixed points. The authors of \cite{CBB,CBW} have experimented that there are $26\%$ of keys
such that $c = 60\%$. Among the computed fixed points $v(0) = v(512)$, they also
assume a good chance that at least one couple $v'(0), v''(0)$ of them is such that
$v'(0) = v'(64), v''(0) = v''(64)$. In our experiments we make use of $4$ distinct
random such ``weak keys''.

In the following table, we present then statistics of the values $b,d$. Similarly to
Section 6, the solving time $d$ is provided for \Gr\ basis algorithms and SAT solvers
and it appears in the columns corresponding to {\sc slimgb, std, minisat, cryptominisat}.
Recall that $d$ is the total computing time for solving the polynomial systems
corresponding to all couples that are obtained by computed fixed points $v(0) = v(512)$.

\begin{table}[ht!]
\centering
\caption{$90\%$ Confidence Interval for timings}
\label{corrGkee}
\begin{tabular}{|c|c|c|c|c|}
\hline
b (ms) & slimgb (ms) & std (ms) & MiniSat (ms) & CrMiniSat (ms) \\
\hline 
$(2.9,~3.2)\cdot 2^{-10}$ & $(93,~183)$ & $(27,~93)$ & $(12,~27)$ & $(16,~32)$ \\
\hline
\end{tabular}
\end{table}

In Table \ref{corrGkee}, we present $90\%$ confidence intervals corresponding to
4 weak key and the correct guess of the 16 variables corresponding to each key.
The total encryption time $b\cdot c\cdot 2^{32}$ when $c = 0.6$ is
about 2.5 hours by an executable C file. The confidence interval of $b$, that is,
the timing for a single encryption is obtained by the corresponding interval
of this total time.

For the polynomial systems involved in the \Kee\ attack, the SAT solvers seem to be
the best option. Nevertheless, note that in the tables of Section 6 and 7,
the time for computing ANF-to-CNF conversion (see, for instance, \cite{Ba})
is not considered. In particular, for \Kee\ attack this would imply that
\Gr\ bases ({\sc std} method) and SAT solvers have comparable timings. In any case,
for the total running time (\ref{keetime}) of this algebraic attack to \Kee, the dominant
timing is clearly $a = 582$ hours which confirms the results in \cite{CBB,CBW}.

\section{Conclusions}

We have shown in this paper that the notion of system of (ordinary) explicit
difference equations over a finite field is useful for modeling the class
of ``stream and block difference ciphers'' which many ciphers of application
interest belong to. The appropriate algebraic formalization of such systems
and corresponding ciphers requires the theory of difference algebras and ideals,
as well the methods of difference \Gr\ bases. This formalization allows the study
of general properties of the difference ciphers such as their invertibility and
periodicity. This study is essential to assess their security by means of suitably
defined algebraic attacks. We have illustrated this in practice using two well-known
difference ciphers, \Biv\ and \Kee, where \Gr\ bases and SAT solvers are also
compared. We plan to include different methods from the formal theory of Ordinary Difference
Equations for further improving the cryptanalysis of difference ciphers. We believe
therefore that the proposed modeling and the corresponding methods will be useful
for the development of new applicable ciphers.

\section{Acknowledgements}

We would like to thank Arcangelo La Bianca, not only for providing computational
resources for our experimental sessions, but also for having patiently supported
the first author erratic paths in problem solving.
We also thank the anonymous referees for the careful reading of the
manuscript. We have sincerely appreciated all valuable comments and suggestions
as they have significantly improved the readability of the paper.

\end{document}